\documentclass{article}

\usepackage{fullpage}
\usepackage{amsmath,amsthm,amssymb}
\usepackage{mathabx}

\usepackage{amsfonts}
\usepackage{bbm}

\usepackage{graphicx}
\usepackage[font=footnotesize]{caption}
\usepackage[font=footnotesize]{subcaption}
\graphicspath{ {images/} }

\usepackage{lipsum} 
\usepackage{float, color}

\newtheorem{theorem}{Theorem}

\newtheorem{proposition}[theorem]{Proposition}

\newtheorem{lemma}{Lemma}
\newtheorem{remark}{Remark}

\newtheorem{assumption}{Assumption}

\def\bs{\boldsymbol}
\def\mb{\mathbf}
\def\mc{\mathcal}

\newcommand\oprocendsymbol{\hbox{$\square$}}
\newcommand\oprocend{\relax\ifmmode\else\unskip\hfill\fi\oprocendsymbol}

\makeatletter
\def\ps@pprintTitle{%
 \let\@oddhead\@empty
 \let\@evenhead\@empty
 \def\@oddfoot{}%
 \let\@evenfoot\@oddfoot}
\makeatother

\title{Robustness and Centrality in Markov-switching Networks
\thanks{This is an extended version of the conference paper~\cite{SC-VS-NEL:17g}. Beyond the results reported therein, this extended version incorporates an analysis of the discrete-time setting and derives analytical specializations of the general results for the case of switching between two graph topologies.}
\thanks{This research has been supported by ONR grant N00014-14-1-0635 and ARO grant W911NF-14-1-0431.}
}
\author{Sarah H. Cen, Vaibhav Srivastava, and Naomi Ehrich Leonard 
\thanks{S. H. Cen is with the Department of Electrical and Computer Engineering, Carnegie Mellon University, Pittsburgh, PA 15213, USA (sarahcen@andrew.cmu.edu). V. Srivastava is with the Department of Electrical and Computer Engineering, Michigan State University, East Lansing, MI 48824, USA (vaibhav@egr.msu.edu). 
N. E. Leonard is with the Department of Mechanical and Aerospace Engineering, Princeton University, Princeton, NJ 08544, USA (naomi@princeton.edu).
}}

\date{}

\begin{document}

\maketitle

\begin{abstract}
We investigate how time-varying interactions, modeled via a Markov switching graph (MSG), impact the robustness of noisy multi-agent dynamics in both continuous- and discrete-time settings. Our focus is on the steady-state performance of consensus and leader-follower tracking dynamics subject to stochastic noise. Using the framework of Markov jump linear systems (MJLS), we derive expressions for the steady-state covariance of each agent's deviation from consensus and tracking error, respectively, and use them to quantify individual and group performance as a function of the interaction graphs and the switching dynamics. We extend established notions of robustness, certainty indices, and joint centrality from static graphs to the MSG setting. To gain analytical insight, we specialize our results to systems switching between two topologies and characterize how switching influences performance. Numerical simulations further illustrate how switching topologies affects system robustness in both coordination tasks.
\end{abstract}






\section{Introduction} \label{sec:intro}

Many systems rely on the coordinated behavior of multiple agents, each performing individual tasks while interacting strategically to achieve a common objective. Such multi-agent systems appear across a wide range of domains, including biology, engineering, and economics. These systems often operate in environments characterized by stochastic disturbances and time-varying patterns of interaction. Designing effective multi-agent systems thus requires a careful understanding of how temporal variations in connectivity and the presence of noise influence both individual and collective performance.

 In this paper, we examine the influence of time-varying interaction among agents on the robustness of consensus dynamics~\cite{olfati2007consensus,ren2005survey} and the related leader-follower collective tracking problem, in the presence of stochastic noise. We model the time-varying interaction by a Markov switching graph (MSG), defined by a set of static graphs and a Markov chain that describes the transitions among these graphs.

The robustness of the consensus problem has been extensively studied for static interaction graphs among agents~\cite{young2010robustness,zelazo2015robustness,siami2016fundamental, bamieh2012coherence, xiao2007distributed,pirani2023graph}. Extending these results to time-varying graphs is challenging due to the limited analytic tractability of time-dependent dynamical systems. Consensus dynamics have been analyzed for deterministically time-varying graphs~\cite{LM:05} and for stochastically time-varying graphs~\cite{matei2013convergence, tahbaz2010consensus, touri2011ergodicity}, but these works do not address the impact of stochastic noise. In particular, tools such as contraction analysis cannot be readily applied in the presence of stochastic disturbances.

We address this gap by leveraging the structure of MSGs, a tractable subclass of stochastically time-varying graphs, to characterize the robustness properties of consensus dynamics under noise. While consensus over MSGs has been previously studied in~\cite{matei2013convergence}, the role of noise in shaping performance within such frameworks has, to the best of our knowledge, not been examined.

The noisy leader-follower collective tracking problem involves a network of agents that must track an external reference signal despite the presence of measurement and communication noise. A subset of agents, called leaders, directly observe the reference signal, while the remaining followers rely solely on information exchanged through the network. This problem has been studied primarily in the context of optimal leader selection, where the goal is to choose a leader set that optimizes collective tracking performance~\cite{patterson2010leader, lin2014algorithms, KF-NEL:15, AC-LB-RP:12}, as well as in the context of edge modifications~\cite{shrinate2024towards}. Prior work has addressed leader selection for time-varying graphs that evolve much more slowly than the consensus dynamics, or for stochastic time-varying graphs subject to random link failures~\cite{AC-LB-RP:12}.
In contrast, we study the noisy leader-follower tracking problem under MSGs, where graph switching may occur at comparable or faster timescales. Our framework does not require any time-scale separation between graph and state dynamics and accommodates a broader class of stochastic interaction models than those considered in~\cite{AC-LB-RP:12}. Similar problems have been studied in the context of sensor selection~\cite{vafaee2024real}.

Our results leverage the theory of Markov jump linear systems (MJLS)~\cite{do2012continuous, do2005discrete} to characterize robustness and leadership properties in time-varying multi-agent networks. Our main contributions are fourfold. First, we derive measures of steady-state performance for noisy consensus and leader-follower tracking dynamics under MSGs, in both continuous- and discrete-time settings. These measures quantify system and node-level deviations from ideal behavior, due to noise, as a function of the network topologies and the switching process. Second, we show how these measures can be used to extend existing notions of robustness, certainty, and centrality measures for static graphs to MSGs. 
 Third, we specialize our general results to the case of switching between two network topologies. This analysis allows us to characterize how switching topologies influences performance relative to a randomly assigned static topology. Fourth, we numerically illustrate how system performance can be different and sometimes better for the dynamic graph as compared to the static graph. 

The remainder of the paper is organized as follows.
In Section~\ref{sec:two_coord_probs}, we introduce the noisy consensus and leader-follower tracking problems under Markov switching graphs (MSGs), define the performance metrics of interest, and review relevant background on continuous-time Markov jump linear systems (MJLS), which form the foundation of our analysis. 
In Section~\ref{sec:perf_mjls}, we derive steady-state performance expressions for both coordination problems.
Section~\ref{sec:remark_err} introduces robustness, certainty, and centrality indices for MSGs. To gain additional analytic insights, we specialize to the case of switching between two graph topologies in Section~\ref{sec:two-top-msg}. Section~\ref{sec:simulations} presents numerical examples that illustrate the influence of switching dynamics on system performance.
We conclude in Section~\ref{sec:conclusion}. Extension of the results to a discrete-time setting is presented in~\ref{ap:dt}.

\subsection{Notation}

In this paper, matrices are denoted by upper-case letters and vectors by lower-case letters in bold. The $i$-th element of vector $\mathbf{x}$ is denoted by $x_i$, and the $(i,j)$-th entry of a matrix $A$ is denoted $A_{ij}$. The transpose, inverse, and pseudoinverse of a matrix $A$ are written as $A^\top$, $A^{-1}$, and $A^{+}$, respectively. The vectors $\mathbf{0}_n$ and $\mathbf{1}_n$ denote $n \times 1$ vectors of all zeros and ones, respectively. The $n \times n$ identity matrix is $\mathbb{I}_n$ and the $n \times n$ exchange matrix is $\mathbb{J}_n$. 

The matrix $A \otimes B \in \mathbb{R}^{mp \times nq}$ denotes the Kronecker product of matrices $A \in \mathbb{R}^{m \times n}$ and $B \in \mathbb{R}^{p \times q}$. The matrix $A \oplus B = (A \otimes \mathbb{I}_{n} + \mathbb{I}_{m} \otimes B ) \in \mathbb{R}^{mn \times mn}$ denotes the Kronecker sum
of matrices $A \in \mathbb{R}^{m \times m}$ and $B \in \mathbb{R}^{n \times n}$. The covariance, trace, and vectorization of a matrix A are denoted by $\text{cov}(A)$, $\text{tr}(A)$, and $\text{vec}(A)$, respectively. 
Additionally, $\text{diag}_n (A_i)$ is the block-diagonal matrix with entries $(A_1, A_2, \ldots, A_n)$.

The indicator function is denoted $\mathbbm{1}(E)$, and the probability of an event $E$ is denoted $\Pr(E)$. The $(n-1)$-dimensional probability simplex in $\mathbb{R}^n$ is denoted $\Delta_n$, and the expectation of a random variable $X$ is written as $\mathbb{E}(X)$.


\section{Two noisy distributed coordination problems under Markov switching graphs}
\label{sec:two_coord_probs}

In this section, we review the noisy consensus and noisy leader-follower reference tracking coordination problems for time-varying networks. We present our definitions of system and node errors as measures of performance. Finally, we introduce MSG and MJLS. 

\subsection{Problem Statement}

Consider a network of $m$ agents with system state $\mathbf{x}(t) \in \mathbb{R}^m$, where $x_i(t)$ is the state of agent $i$ at time $t \geq 0$. At time $t$, each agent $i$ sends and receives information from its set of neighbors ${N}_i(t)$. The resulting communication topologies are represented by the undirected, unweighted time-varying graph $\mathcal{G}(t) = (\mathcal{V},\mathcal{E}(t), Y(t))$ for the set of nodes $\mathcal{V} = \{1, \ldots, m\}$, set of edges $\mathcal{E}(t) \subseteq \mathcal{V} \times \mathcal{V}$, and adjacency matrix $Y(t) \in \mathbb{R}^{m \times m}$. Each graph node corresponds to an agent in the network, and an edge $(i,j)$ between nodes $i$ and $j$ exists at time $t$ when $j \in {N}_i(t)$. $Y_{ij}(t) = 1$ if edge $(i,j)$ exists at time $t$; otherwise, $Y_{ij}(t) = 0$. Since $\mathcal{G}(t)$ is undirected,  $(i,j)$ implies $(j,i)$, and $Y(t)$ is thus symmetric. The degree of node $i$ at time $t$ is $d_i(t) = \sum_{j = 1}^m Y_{ij}(t)$, and the degree matrix $D(t) = \text{diag}(d_1(t), d_2(t), \hdots, d_m(t))$. The Laplacian matrix of the graph at time $t$ is $L(t) = D(t) - Y(t)$.

In the \emph{noisy consensus} problem, a set of agents seeks to reach agreement over time. They resolve their differing opinions, which are given in $\mathbf{x}$, through noisy network communication. We study the network consensus problem \cite{olfati2007consensus} under stochastic noise for time-varying graphs. The associated continuous-time  dynamics are
\begin{align}
d\mathbf{x}(t) = -L(t) \mathbf{x}(t) dt + F d \mathbf{W}(t), 
\label{eq:noisy-consensus-switching}
\end{align}
where $F$ is the system noise matrix and $d \mathbf{W}(t)$ is the $m$-dimensional standard Wiener process increment. Consensus is achieved when $\mathbf{x} = \alpha \mathbf{1}_m$, where $\alpha \in \mathbb{R}$ is the agreement value.

In the \emph{noisy leader-follower reference tracking} problem, the agents seek to track an external reference signal $\theta \in \mathbb{R}$ such that $x_i$ represents agent $i$'s estimate of $\theta$.  A subset of agents assigned as the \emph{leaders} directly measure $\theta$, which is affected by noise, and the remaining agents are termed \emph{followers}. All agents exchange information via noisy communications with their neighbors. The objective of the leaders is to drive the state of every agent to $\theta$. The associated continuous-time dynamics are
\begin{align}
   \hspace{-1pt} d\mathbf{{x}}(t) = -( L(t) \mathbf{x}(t) + K(\mathbf{x}(t) - \theta \mathbf{1}_m) )dt + F  d \mathbf{W}(t), \label{eq:leader-follower-reference}
\end{align}
where the leadership matrix $K \in \mathbb{R}^{m \times m}$ is a diagonal matrix with entries $\{ \kappa_1, \kappa_2, \hdots , \kappa_m\}$. If agent $i$ is a leader, then $\kappa_i = \kappa > 0$ such that the leaders share the same gain $\kappa$. If agent $i$ is a follower, then $\kappa_i$ = 0. The cardinality of the leader set $\mathcal{K}$ is given by $|\mathcal{K}|$. Without loss of generality, we shift the origin of $\mathbf{x}$ to $\theta \mathbf{1}_m$, reducing the dynamics in \eqref{eq:leader-follower-reference} to 
\begin{align}
  d  \mathbf{{x}}(t) = -M(t) \mathbf{x}(t) dt + F d \mathbf{W}(t) , \label{eq:leader-follower-reference-2}
\end{align}
where $M(t) = L(t) + K$.



We assess network performance using the following definitions of node and system errors, which apply only to systems for which the steady-state covariance exists. Define the {\em node error} $\text{E}_i$ of node $i$  as the steady-state variance of $x_i$: 
\begin{align*}
    \text{E}_i (\Sigma_{\text{ss}}(\mb x)) = (\Sigma_{\text{ss}}(\mb x))_{ii},
\end{align*}
where $\Sigma_{\text{ss}}(\mb x)$ is the steady-state covariance of $\mathbf{x}$. Define the {\em system error} $\text{E}$  as the steady-state variance of $\mathbf{x}$: 
\begin{align*}
    \text{E}(\Sigma_{\text{ss}}(\mb x)) = \text{tr}(\Sigma_{\text{ss}}(\mb x)) = \sum_{i=1}^m \text{E}_i (\Sigma_{\text{ss}}(\mb x)).
\end{align*} 

We draw inspiration for these definitions from previous works, including \cite{fitch2013information, patterson2010leader, young2010robustness,poulakakis2012node}. For the noisy consensus problem, the error quantifies the dispersion (or distance) from consensus and is thus inversely related to its robustness of consensus, as shown in \cite{young2010robustness}. For the noisy leader-follower reference tracking problem, the error indicates the success of the leader set in driving the agents' states to the reference signal. 


\subsection{Markov Switching Graph}

We study the two coordination problems under the assumption that the network switches between a finite set of graphs $\mathbb G = \{G_1, \ldots, G_n\}$ according to a Markov chain (MC). 
The resulting time-varying graph is known as a \emph{Markov switching graph} (MSG). Under a given MSG $\mathcal{G}$, the linear dynamical systems \eqref{eq:noisy-consensus-switching}-\eqref{eq:leader-follower-reference-dt} are \emph{Markov jump linear systems} (MJLS) \cite{gupta2009networked}. Recall that the total number of possible unweighted graphs for a given finite node set is also finite. Hence, the set $\mathbb G$ is assumed to be finite without loss of generality. Let $S=\{1, \ldots, n\}$ be the graph index set. 

The graph switching behavior of the continuous-time system is governed by a continuous-time MC (CTMC). For the graph set $\mathbb G$, the CTMC is specified by the infinitesimal time-homogeneous generator matrix $\Gamma \in \mathbb{R}^{n \times n}$ with elements:
\begin{align*}
    \Gamma_{ij} = \begin{cases} 
        q_{ij}, & i \neq j ,\\
        -v_i, & i = j,
    \end{cases}  
\end{align*}
for $i, j \in S$.  Here, $v_i = \sum_{j \in S \setminus i} q_{ij}$ is the holding rate of graph $G_i$, and $q_{ij} \geq 0$ is the transition rate from $G_i$ to $G_j$. Intuitively, $q_{ij}$ 
is the rate parameter of an exponential distribution that determines the probability that the system in graph $G_i$ transitions to graph $G_j$ with time  \cite{grimmett2001probability}.  Note that all rows in $\Gamma$ sum to $0$, and $-\Gamma$ is a Laplacian matrix. 


For the CTMC with the generator matrix $\Gamma$, let $\boldsymbol{\pi}(t) \in \Delta_n$ be the probability distribution over $\mathbb{G}$ at time $t$. Specifically, $\pi_i(t)$ is the probability that the network graph is $\mathcal{G}(t) = G_i$, and $\sum_{i = 1}^n {\pi}_{i}(t) = 1$.  Furthermore, $\boldsymbol{\pi}(t) = e^{\Gamma^\top t} \boldsymbol{\pi}(0)$ \cite{grimmett2001probability}.

The remainder of our analysis makes the following two assumptions about the MSG. 
\begin{assumption}\label{as:ergodic}
The MC underlying the MSG is \emph{ergodic}.
\end{assumption}
\begin{assumption}\label{as:connected}
Every graph in the set $\mathbb G$ is unweighted, undirected, and connected. 
\end{assumption}

Under Assumption~\ref{as:ergodic}, the MC must have a unique stationary distribution $\boldsymbol{\pi}_{ss}$ such that $\lim_{t \rightarrow \infty} \boldsymbol{\pi}(t) = \boldsymbol{\pi}_{\text{ss}}$ and $\lim_{k \rightarrow \infty} \boldsymbol{\pi}(k) = \boldsymbol{\pi}_{\text{ss}}$ for the CTMC and DTMC, respectively. An ergodic CTMC guarantees that $G^\top$ has exactly one eigenvalue at $0$ with the right eigenvector $\boldsymbol{\pi}_{ss}$ (i.e., $\boldsymbol{\pi}_{ss} = e^{G^\top} \boldsymbol{\pi}_{ss}$), and all others in the left half-plane. An ergodic DTMC ensures that ${P^\top}$ has exactly one eigenvalue at $1$ with the right eigenvector $\boldsymbol{\pi}_{ss}$ (i.e., $\boldsymbol{\pi}_{ss} = P^\top \boldsymbol{\pi}_{ss}$), and all others strictly within the unit circle in the complex plane. 

Assumption~\ref{as:connected} can be relaxed under certain conditions, but for clarity of exposition, we keep it. 


\subsection{Continuous-time Markov jump linear systems} \label{sec:prelims_mjls_ct}

Consider the following continuous-time (CT) MJLS:
\begin{align}
    d\mathbf{x}(t) = -Z(t) \mathbf{x}(t)dt + F d\mathbf{W}(t),
    \label{eq:generic_mjls}
\end{align}
where $Z(t) \in \mathbb{R}^{m \times m}$ corresponds to the network graph of the MSG at time $t$ with the generator matrix $\Gamma$. Let $Z(t) =Z_i$ whenever $\mathcal G(t) = G_i$. Next, we obtain the dynamics of the mean and second moment of $\mathbf{x}(t)$ evolving according to~\eqref{eq:generic_mjls}.

For the following proposition, let the mean $\boldsymbol{\mu}(t) = \mathbb{E}[\mathbf{x}(t)]$. The contribution of graph $G_i \in \mathbb{G}$ to the mean is $\boldsymbol{\mu}^i(t) = \mathbb{E}[\mathbf{x}(t) \mathbbm{1}(\mathcal{G}(t) = G_i)]$ such that $\boldsymbol{\mu}(t) = \sum_{i = 1}^n \boldsymbol{\mu}^i(t)$. Vertically stacking the means for all graphs gives the vector 
\begin{align*}
    \boldsymbol{{\nu}}(t) = [\boldsymbol{\mu}^1(t)^\top, \boldsymbol{\mu}^2(t)^\top, \hdots , \boldsymbol{\mu}^n(t)^\top]^\top.
\end{align*}

Similarly, let the second moment $C(t) = \mathbb{E}[\mathbf{x}(t) \mathbf{x}(t)^\top]$. The contribution of graph $G_i \in \mathbb G$ to the second moment is $C^i(t) = \mathbb{E}[\mathbf{x}(t) \mathbf{x}(t)^\top \mathbbm{1}(\mathcal{G}(t) = G_i)]$ such that $C(t) = \sum_{i = 1}^n C^i(t)$. Vertically stacking the vectorized second moments for all graphs gives the vector 
\begin{align*}
    \mathbf{c}(t) = [\text{vec}(C^1(t))^\top, \text{vec}(C^2(t))^\top, \hdots , \text{vec}(C^n(t))^\top]^\top.
\end{align*}

Finally, let  $Q = F F^\top$, $\mathcal{N} =  \text{diag}_n(Z_i) - \Gamma^T\otimes \mathbb{I}_m$ and $\mathcal{M} = \text{diag}_n(Z_i \oplus Z_i) - \Gamma^\top \otimes \mathbb{I}_{m^2}$.
\begin{proposition} 
The following statements hold for the CT MJLS~\eqref{eq:generic_mjls} with an MSG satisfying Assumptions~\ref{as:ergodic} and \ref{as:connected}:
\begin{enumerate}
    \item The dynamics of the mean term $\boldsymbol{{\nu}}(t)$ are
\begin{align}
    \boldsymbol{\dot{{\nu}}}(t) = -\mathcal{N} \boldsymbol{{\nu}}(t);
    \label{eq:dyn_mean}
\end{align} 
\item The dynamics of the second moment term $\mathbf{{c}}(t)$ are
\begin{align}
    \dot{\mathbf{c}}(t) &= - \mathcal{M} \mathbf{{c}}(t) + \boldsymbol{\pi}({t}) \otimes \text{vec}({Q}).  
    \label{eq:dyn_second_moment}
\end{align}
\end{enumerate}
\label{prop_ct_cov_dyn}
\end{proposition}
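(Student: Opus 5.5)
We derive each equation by computing the infinitesimal evolution of the per-mode moments $\boldsymbol{\mu}^i(t)$ and $C^i(t)$. The tool is the generator of the joint Markov process $(\mathbf{x}(t),\mathcal{G}(t))$. For a test function $f(\mathbf{x},i)$ that is smooth in $\mathbf{x}$, this generator acts as
\begin{align*}
(\mathcal{L}f)(\mathbf{x},i) = -\nabla_{\mathbf{x}} f(\mathbf{x},i)^\top Z_i \mathbf{x} + \tfrac{1}{2}\text{tr}\big(Q \nabla^2_{\mathbf{x}} f(\mathbf{x},i)\big) + \sum_{j\in S} \Gamma_{ij} f(\mathbf{x},j).
\end{align*}
Dynkin's formula then gives $\frac{d}{dt}\mathbb{E}[f(\mathbf{x}(t),\mathcal{G}(t))] = \mathbb{E}[(\mathcal{L}f)(\mathbf{x}(t),\mathcal{G}(t))]$. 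Equivalently, one can condition on $\mathcal{G}(t)=G_i$ over a small interval $[t,t+h]$. With probability $1+\Gamma_{ii}h+o(h)$ the mode stays at $i$, and $\mathbf{x}$ follows the linear SDE with $Z_i$. With probability $q_{ji}h+o(h)$ the mode jumps from $j$ to $i$, while $\mathbf{x}$ stays continuous. Either route yields the same result, and I would present the Dynkin version for brevity.

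\textbf{Mean.} Take $f(\mathbf{x},j)=x_k\mathbbm{1}(j=i)$. The diffusion term vanishes because $f$ is linear in $\mathbf{x}$. The drift term gives $-(Z_i\boldsymbol{\mu}^i)_k$. The jump term is $\sum_j \Gamma_{j\ell}\mathbbm{1}(\ell=i)$ evaluated at the current mode $j$, which after taking expectations gives $\sum_j \Gamma_{ji}\boldsymbol{\mu}^j_k$. Hence
\begin{align*}
\dot{\boldsymbol{\mu}}^i = -Z_i\boldsymbol{\mu}^i + \sum_{j} \Gamma_{ji}\boldsymbol{\mu}^j.
\end{align*}
Stacking over $i$ turns the first term into $-\text{diag}_n(Z_i)\boldsymbol{\nu}$ and the second into $(\Gamma^\top\otimes\mathbb{I}_m)\boldsymbol{\nu}$. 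This is exactly \eqref{eq:dyn_mean} with $\mathcal{N}=\text{diag}_n(Z_i)-\Gamma^\top\otimes\mathbb{I}_m$.

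\textbf{Second moment.} Take $f(\mathbf{x},j)=x_kx_\ell\mathbbm{1}(j=i)$. By It\^o's rule the drift and diffusion terms give
\begin{align*}
-Z_iC^i - C^iZ_i^\top + Q\,\Pr(\mathcal{G}(t)=G_i) = -Z_iC^i - C^iZ_i^\top + \pi_i(t)Q.
\end{align*}
The jump term again gives $\sum_j\Gamma_{ji}C^j$. To vectorize, use $\text{vec}(Z_iC+CZ_i^\top)=(\mathbb{I}\otimes Z_i + Z_i\otimes\mathbb{I})\text{vec}(C)=(Z_i\oplus Z_i)\text{vec}(C)$; the two Kronecker terms commute, so the ordering convention in $\oplus$ is immaterial. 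Stacking over $i$ then gives $-\mathcal{M}\mathbf{c}+\boldsymbol{\pi}(t)\otimes\text{vec}(Q)$, which is \eqref{eq:dyn_second_moment}.

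\textbf{Main obstacle.} The delicate step is justifying Dynkin's formula for the unbounded functions $x_k\mathbbm{1}(\cdot)$ and $x_kx_\ell\mathbbm{1}(\cdot)$, which requires finite second moments. On any finite horizon, $\mathbf{x}(t)$ has finite moments because the dynamics are linear with bounded coefficients ($\max_i\|Z_i\|$ is finite) and the noise is Gaussian. One can therefore localize with stopping times and pass to the limit by dominated convergence. The jump-term bookkeeping, namely that the mode process contributes $\Gamma^\top$ rather than $\Gamma$ after taking expectations, follows from $\Pr(\mathcal{G}(t+h)=G_i\mid\mathcal{G}(t)=G_j)=\mathbbm{1}(i=j)+\Gamma_{ji}h+o(h)$. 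This is the point where the transpose in $\mathcal{N}$ and $\mathcal{M}$ originates, and it needs careful checking. Assumptions~\ref{as:ergodic} and \ref{as:connected} are not needed for these dynamics themselves; they matter only for the steady-state analysis later.
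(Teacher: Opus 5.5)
Your proof is correct. It reaches the same per-mode equations as the paper,
$\dot{\boldsymbol{\mu}}^i = -Z_i\boldsymbol{\mu}^i + \sum_j\Gamma_{ji}\boldsymbol{\mu}^j$ and $\dot C^i = -Z_iC^i - C^iZ_i^\top + \sum_j\Gamma_{ji}C^j + \pi_i(t)Q$. It then uses the same stacking and $\text{vec}(AXB)=(B^\top\otimes A)\text{vec}(X)$ step.

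The difference is only in how the time derivative is obtained. The paper takes the short-interval conditioning route you mention as an alternative. It writes an Euler step $\mathbf{x}_{t+h}=(\mathbb{I}_m-hZ_t)\mathbf{x}_t+F\,d\mathbf{W}(h)$ and expands $p_{ii}(h)=1-v_ih+O(h^2)$ and $p_{ji}(h)=q_{ji}h+O(h^2)$. It then applies a total-expectation lemma over the mode at time $t$ and lets $h\to 0$. That version is elementary and shows directly why $\Gamma^\top$ appears. However, it is heuristic about the $O(h^2)$ remainders and about interchanging the limit with the expectation.

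Your generator/Dynkin formulation handles the drift, the It\^o correction $\pi_i(t)Q$ and the mode jumps in one identity. Together with the localization argument you sketch, it gives a rigorous derivation. That argument needs only finite second moments, and so implicitly $\mathbb{E}\|\mathbf{x}(0)\|^2<\infty$, which the paper also leaves unstated.

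One small cleanup: in the mean computation, the jump term at current mode $j$ should read $\sum_\ell\Gamma_{j\ell}\,x_k\mathbbm{1}(\ell=i)=\Gamma_{ji}x_k$. As written, the summation index and the factor $x_k$ are garbled, though your final expression $\sum_j\Gamma_{ji}\mu^j_k$ is right. You are also right that Assumptions~\ref{as:ergodic} and~\ref{as:connected} play no role in this derivation.
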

\begin{proof}
    This result is standard in MJLS literature. See, for example, \cite[Proposition 3.5]{feng1992stochastic}. 
For completeness, we have included a short proof in~\ref{ap:prop_ct_cov_dyn}.
\end{proof}


\section{Performance of distributed coordination under Markov switching graphs} \label{sec:perf_mjls}

In this section, we study and interpret the performance of the two noisy coordination problems described in Section~\ref{sec:two_coord_probs} under MSGs. 
Let $\mathbf{h}_{S,1} = \text{vec}(\mathbb{I}_{m})^\top$ and
$\mathbf{h}_{S} = \mathbf{1}_n^\top \otimes \mathbf{h}_{S,1}$. Furthermore, let $\mathbf{h}_{N,1}^i = \text{vec}(\mathbb{O}^i_{m}))^\top$ and $\mathbf{h}_{N}^i = \mathbf{1}_n^\top \otimes \mathbf{h}_{N,1}^i$, where $\mathbb{O}^i_{m}$ is the $m \times m$ matrix containing all zeros except at element $(i,i)$, which takes the value $1$. Then, $\mathbf{h}_{S} \mathbf{c}$ is the trace of second moment $\mathbb{E}[\mathbf{x} \mathbf{x}^\top]$, and $\mathbf{h}_{N}^i \mathbf{c}$ is its diagonal element $(i,i)$ corresponding to node $i$. At steady state, these expressions are the contribution of the second moment to the system and node errors, 
respectively. 

\subsection{Noisy consensus under MSGs} \label{sec:robust_consensus}

Let $\mathcal N_c$ and $\mathcal M_c$ be the system matrices in~\eqref{eq:dyn_mean} and~\eqref{eq:dyn_second_moment} after specializing Proposition~\ref{prop_ct_cov_dyn} to the CT MJLS~\eqref{eq:noisy-consensus-switching}. 
\begin{lemma}
For the continuous-time noisy consensus dynamics \eqref{eq:noisy-consensus-switching} under Assumptions \ref{as:ergodic} and \ref{as:connected}, both $\mathcal{N}_c$ and $\mathcal{M}_c$ have exactly one eigenvalue at $0$ each; all other eigenvalues lie strictly in the right half-plane.
    \label{lemma_noisy_consensus_eig}
\end{lemma}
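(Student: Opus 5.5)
Here $\mathcal N_c=\text{diag}_n(L_i)-\Gamma^\top\otimes\mathbb I_m$ and $\mathcal M_c=\text{diag}_n(L_i\oplus L_i)-\Gamma^\top\otimes\mathbb I_{m^2}$. I would handle both matrices with one argument, since each has the form $\mathcal A=\text{diag}_n(A_i)-\Gamma^\top\otimes\mathbb I_p$. For $\mathcal N_c$ we take $A_i=L_i$ and $p=m$; for $\mathcal M_c$ we take $A_i=L_i\oplus L_i$ and $p=m^2$. Each $A_i$ is symmetric positive semidefinite, and under Assumption~\ref{as:connected} its kernel is one-dimensional: it is spanned by $\mathbf 1_m$ for $L_i$, and by $\mathbf 1_m\otimes\mathbf 1_m$ for $L_i\oplus L_i$. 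The latter holds because the eigenvalues of $L_i\oplus L_i$ are $\lambda_a+\lambda_b$, and this sum vanishes only when $\lambda_a=\lambda_b=0$. Crucially, this kernel vector $\mathbf u$ is the same for every $i$.

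\textbf{Step 1 (zero eigenvalue).} I would first show that $\mathbf 1_n\otimes\mathbf u$ is a left null vector of $\mathcal A$. On the diagonal part, $(\mathbf 1_n\otimes\mathbf u)^\top\text{diag}_n(A_i)=0$ because $A_i$ is symmetric and $A_i\mathbf u=0$. On the coupling part, $(\mathbf 1_n^\top\otimes\mathbf u^\top)(\Gamma^\top\otimes\mathbb I_p)=(\Gamma\mathbf 1_n)^\top\otimes\mathbf u^\top=0$, since the rows of $\Gamma$ sum to zero. For the right null vector I would take $\boldsymbol\pi_{ss}\otimes\mathbf u$. Then $\text{diag}_n(A_i)(\boldsymbol\pi_{ss}\otimes\mathbf u)=0$, and $(\Gamma^\top\otimes\mathbb I)(\boldsymbol\pi_{ss}\otimes\mathbf u)=\Gamma^\top\boldsymbol\pi_{ss}\otimes\mathbf u=0$. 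So $0$ is an eigenvalue.

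\textbf{Step 2 (location of the spectrum).} Here I would use a Gershgorin/M-matrix argument rather than symmetry, because $\Gamma$ need not be symmetric. The off-diagonal entries of $\mathcal A$ are of two kinds: $-(A_i)_{kl}\le0$, and $-q_{ji}\le0$ coming from $-\Gamma^\top$. Hence $\mathcal A$ is a Z-matrix. For $\mathcal N_c$ the column sums vanish, because $\mathbf 1^\top L_i=0$ and $\mathbf 1^\top\Gamma^\top=0$. So $\mathcal N_c^\top$ is the Laplacian of a weighted directed graph, and its eigenvalues have nonnegative real part. Moreover $0$ is simple exactly when this digraph has a spanning tree (equivalently, for the column-Laplacian, a globally reachable node). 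The graph lives on node pairs $(i,k)$. Each layer $i$ is connected by Assumption~\ref{as:connected}, and the layers are linked through the ergodic chain $\Gamma$ (Assumption~\ref{as:ergodic}), which is strongly connected. So the whole digraph is strongly connected, giving a simple zero eigenvalue with all other eigenvalues in the open right half-plane.

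For $\mathcal M_c$ the matrix $L_i\oplus L_i$ has nonpositive off-diagonal entries and zero column sums, since $\mathbf 1^\top(L_i\otimes\mathbb I+\mathbb I\otimes L_i)=0$. It is therefore itself the Laplacian of the Cartesian product graph $G_i\square G_i$, which is connected. The same argument then applies verbatim with $p=m^2$.

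\textbf{Main obstacle.} The main care point is Step 2: making sure that strong connectivity of the combined digraph really yields a simple zero eigenvalue. A single shared kernel vector alone would not rule out extra zeros, nor purely imaginary eigenvalues. I would settle this by appealing to the standard fact that an irreducible singular M-matrix has $0$ as a simple eigenvalue and all other eigenvalues with positive real part (Perron--Frobenius applied to $s\mathbb I-\mathcal A$ for large $s$).
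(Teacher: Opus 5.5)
Your proof is correct, but its key step differs from the paper's. The paper also treats $\mathcal N_c^\top$ as a Laplacian, namely the sum of $\text{diag}_n(L_i)$ and $-\Gamma\otimes\mathbb I_m$, but it argues through null spaces. It cites a lemma of Matei et al.\ asserting $\text{Null}(A+B)=\text{Null}(A)\cap\text{Null}(B)$ for Laplacians $A,B$. It then intersects $\{\mathbf a\otimes\mathbf 1_m\}$ with $\{\mathbf 1_n\otimes\mathbf b\}$ to obtain nullity one. Finally, it appeals to the fact that Laplacian eigenvalues are either $0$ or lie in the open right half-plane; $\mathcal M_c$ is handled by ``similarly.''

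You instead prove that the combined digraph on pairs $(i,k)$ is strongly connected, and then apply Perron--Frobenius to $s\mathbb I-\mathcal A$. This buys you algebraic simplicity of $0$ and the open-half-plane location of the remaining spectrum in one stroke. The paper's nullity count gives only geometric multiplicity one, and tacitly relies on the zero eigenvalue of a digraph Laplacian being semisimple. Moreover, when one summand is a directed Laplacian, the null-space intersection identity does not hold in general. The strong connectivity you verify is exactly what makes it valid here. Your observation that $L_i\oplus L_i$ is the Laplacian of the (connected) Cartesian product of $G_i$ with itself also turns the $\mathcal M_c$ case into a literal repetition rather than an analogy.

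The paper's version is shorter and reads the null vectors directly off the Kronecker block structure. In exchange, it leans on cited results without checking that the connectivity hypotheses they require are met in this nonsymmetric setting.
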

\begin{proof}
When specializing Proposition~\ref{prop_ct_cov_dyn} to the CT MJLS~\eqref{eq:noisy-consensus-switching}, $Z_i = L_i$, and \eqref{eq:generic_mjls} reduces to~\eqref{eq:noisy-consensus-switching}. Consequently, $\mathcal{N}_c = \text{diag}_n(L_i) - \Gamma^\top \otimes \mathbb{I}_m$ and $\mathcal{M}_c = \text{diag}_n(L_i\oplus L_i) - \Gamma^\top \otimes \mathbb{I}_{m^2}$. We seek to show that $\mathcal{N}_c^\top$ and $\mathcal{M}_c^\top$ are effectively the Laplacian matrices of two large graphs that contain $nm$ and $nm^2$ nodes, respectively, and each has exactly one eigenvalue at $0$. 
    
    For $\mathcal{N}_c^\top$, $\text{diag}_n(L_i)$ is the Laplacian of a disconnected graph with $n$ clusters, each of which is a connected subgraph as required. Therefore, the null space of $\text{diag}_n(L_i)$ is spanned by vectors of the form $\mathbf{a} \otimes \mathbf{1}_m$ for any $\mathbf{a} \in \mathbb R^{n}$. Because $\Gamma$ describes an ergodic CTMC, $-\Gamma \otimes \mathbb{I}_m$ is the Laplacian of a connected graph, and the null space of $-\Gamma \otimes \mathbb{I}_m$ is spanned by vectors of the form $\boldsymbol{1}_n \otimes \mathbf{b}$ for any $\mathbf{b} \in \mathbb R^{m}$.
    
    The sum of two Laplacians is also a Laplacian. Furthermore, by Lemma 3.5 in \cite{matei2013convergence}, for two Laplacian matrices $A$ and $B$, $\text{Null}(A + B) = \text{Null}(A)\; \cap\; \text{Null}(B)$. Therefore,  $\text{Null}(\mathcal{N}^\top_c)$ is the intersection of space spanned by $\mathbf{a} \otimes \mathbf{1}_m$ and $\boldsymbol{1}_n \otimes \mathbf{b}$, which is the space spanned by $ \mathbf{1}_{mn}$. Since the eigenvalues of a Laplacian are either at $0$ or lie strictly in the right half-plane~\cite{agaev2005spectra} and the nullity of $\mathcal{N}_c^\top$ is $1$, the transpose $\mathcal{N}$, which shares its eigenvalues, has exactly one eigenvalue at $0$. It can be verified that the corresponding right eigenvector is $\boldsymbol{\pi}_{\text{ss}} \otimes \mathbf{1}_m$. 
    
    It can be shown similarly that $\mathcal{M}_c$ has the unique right eigenvector $\boldsymbol{\pi}_{\text{ss}} \otimes \mathbf{1}_{m^2}$ corresponding to the eigenvalue at $0$.
\end{proof}

Due to the eigenvalues of $\mathcal{N}_c$ and $\mathcal{M}_c$ at $0$, the second moment of $\mathbf{x}$ for the CT MJLS \eqref{eq:noisy-consensus-switching} diverges. However, the diverging part corresponds to the fully correlated component of agents' states. It therefore does not contribute to the deviation from consensus, and consequently should not affect the system and node errors. We thus seek to isolate and disregard this component of the second moment attributed to network consensus. Borrowing terminology from \cite{young2014optimising}, we label the subspace of $\mathbb{R}^{m}$ spanned by $\mathbf{1}_{m}$ the \textit{consensus subspace} and its orthogonal complement the \textit{disagreement subspace}. We show that, as with the static graph \cite{young2010robustness}, the second moments of $\mathbf{x}$ in \eqref{eq:noisy-consensus-switching}  achieve bounded steady-state values when projected onto the disagreement subspace and, importantly, measure the distance from consensus as required for the calculation of the system and node errors. 

For the following propositions, let $\mathbf{x}^\perp \in \mathbb{R}^{m-1}$ represent the orthogonal projection of $\mathbf{x}$ onto the $(m - 1)$-dimensional disagreement subspace $\mathbf{1}^\perp_m$. 
We pick $V \in \mathbb{R}^{(m - 1) \times m}$ such that its rows form the orthonormal basis of $\mathbf{1}_m^\perp$ and let $\mathbb{V} = V^\top V = \mathbb{I} - \frac{1}{m} \mathbf{1}_m \mathbf{1}_m^\top$. 
Then, as in \cite{young2010robustness}, let $\mathbf{x}^\perp = V \mathbf{x}$ and $\bar{\mathbf{x}} = V^\top \mathbf{x}^\perp$ such that $\mathbf{x} = \frac{1}{m} \mathbf{1}_m  \mathbf{1}_m^\top \mathbf{x} + \bar{\mathbf{x}}$. The vector $\bar{\mathbf{x}} \in \mathbb{R}^m$ is the component of $\mathbf{x}$ orthogonal to the consensus subspace, and we refer to it as the disagreement vector. In addition, note that $\mathbf{1}_m$ is the right eigenvector of $L(t)$ and $B(k)$ associated with the eigenvalues $0$ and $1$, respectively.

From \eqref{eq:noisy-consensus-switching}, the continuous-time disagreement dynamics is
\begin{align}\label{eq:disagreement}
    d\bar{\mathbf{x}}(t) = - \mathbb{V} L(t) \bar{\mathbf{x}}(t)dt + \mathbb{V} F d\mathbf{W}(t).
\end{align}

Lastly, mimicking the notation in Section \ref{sec:prelims_mjls_ct}, let $\bar{\boldsymbol{\mu}}(t) = \mathbb{E}[\widebar{\mathbf{x}}(t)]$, $\bar{\boldsymbol{\mu}}^i(t) = \mathbb{E}[\widebar{\mathbf{x}}(t) \mathbbm{1}(\mathcal{G}(t) = G_i)]$, for each $i \in S$, and 
\begin{align*}
    \bar{\boldsymbol{\nu}}(t) = [\bar{\boldsymbol{\mu}}^1(t)^\top, \hdots , \bar{\boldsymbol{\mu}}^n(t)^\top]^\top.
\end{align*}
Let $\widebar C(t)$, $\widebar C^i(t)$ and $\widebar{\mathbf{c}}(t)$, the second moment terms of $\widebar{\mathbf{x}}$, be defined analogously. 

\begin{proposition} For the disagreement dynamics~\eqref{eq:disagreement}  under Assumptions~\ref{as:ergodic}~and~\ref{as:connected}, the following statements hold:
\begin{enumerate}
    \item the steady-state mean disagreement vector is zero, i.e., \begin{align}\label{eq:mean_consensus}
    \bar{\boldsymbol{\nu}}_{\text{ss}} = \mathbf{0}_{nm},
    \end{align}
   where $\mathbf{0}_{nm}$ is the $nm \times 1$ vector of all zeros;
    \item the steady-state second moment of the disagreement vector is 
    \begin{align}
     \bar{\mathbf{c}}_{\text{ss}} = \widebar{\mathcal{M}}_c^{+} (\boldsymbol{\pi}_{\text{ss}} \otimes \text{vec}(\mathbb{V} Q\mathbb{V})), \label{eq:second_moment_consensus}
\end{align}
where $\widebar{\mathcal{M}}_c = \text{diag}_n(\mathbb{V} L_i \mathbb{V} \oplus \mathbb{V} L_i \mathbb{V}) - \Gamma^\top \otimes \mathbb{V} \otimes \mathbb{V}$. 
\end{enumerate}
\label{prop_noisy_consensus_cov_ss}
\end{proposition}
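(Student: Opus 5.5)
Our plan is to apply Proposition~\ref{prop_ct_cov_dyn} directly to the disagreement dynamics~\eqref{eq:disagreement}, viewed as a CT MJLS of the form~\eqref{eq:generic_mjls} with $Z_i = \mathbb{V} L_i$ and noise matrix $\mathbb{V}F$. Since $L_i \mathbf{1}_m = \mathbf{0}_m$ and $L_i$ is symmetric, we have $L_i\mathbb{V} = \mathbb{V}L_i = L_i$. Also $\bar{\mathbf{x}} = \mathbb{V}\bar{\mathbf{x}}$ for all $t$. Hence $\mathbb{V}L_i\bar{\mathbf{x}} = \mathbb{V}L_i\mathbb{V}\bar{\mathbf{x}}$, and we may replace $Z_i$ by the symmetric matrix $\mathbb{V}L_i\mathbb{V}$. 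The noise covariance becomes $\mathbb{V}Q\mathbb{V}$.

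Proposition~\ref{prop_ct_cov_dyn} then gives $\dot{\bar{\boldsymbol{\nu}}} = -\bar{\mathcal{N}}_c\bar{\boldsymbol{\nu}}$ and $\dot{\bar{\mathbf{c}}} = -\mathcal{M}'\bar{\mathbf{c}} + \boldsymbol{\pi}(t)\otimes\text{vec}(\mathbb{V}Q\mathbb{V})$. Here $\bar{\mathcal N}_c=\text{diag}_n(\mathbb{V}L_i\mathbb{V})-\Gamma^\top\otimes\mathbb{I}_m$ and $\mathcal{M}' = \text{diag}_n(\mathbb{V}L_i\mathbb{V}\oplus\mathbb{V}L_i\mathbb{V}) - \Gamma^\top\otimes\mathbb{I}_{m^2}$. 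Each block $\bar{\mathbf{c}}^i = \text{vec}(\bar C^i)$ satisfies $\bar C^i = \mathbb{V}\bar C^i\mathbb{V}$, so $\bar{\mathbf{c}}^i = (\mathbb{V}\otimes\mathbb{V})\bar{\mathbf{c}}^i$. Therefore $\Gamma^\top\otimes\mathbb{I}_{m^2}$ acts on $\bar{\mathbf c}$ exactly as $\Gamma^\top\otimes\mathbb{V}\otimes\mathbb{V}$ does, and we may replace $\mathcal{M}'$ by $\widebar{\mathcal{M}}_c$. The same substitution applies to $\bar{\mathcal N}_c$, with $\Gamma^\top\otimes\mathbb{V}$ replacing $\Gamma^\top\otimes\mathbb{I}_m$. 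The trajectories therefore live in the invariant subspaces $\mathcal{S}_1 = \mathbb{R}^n\otimes\mathbf{1}_m^\perp$ and $\mathcal{S}_2 = \mathbb{R}^n\otimes(\mathbf{1}_m^\perp\otimes\mathbf{1}_m^\perp)$. These subspaces are also invariant for the dynamics, and the forcing term lies in $\mathcal{S}_2$.

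The key step, and the main obstacle, is to show that the restricted operators $\bar{\mathcal N}_c|_{\mathcal S_1}$ and $\widebar{\mathcal M}_c|_{\mathcal S_2}$ are Hurwitz, i.e.\ all their eigenvalues have positive real part. We would argue as in Lemma~\ref{lemma_noisy_consensus_eig}. The full matrices $\mathcal{N}_c$ and $\mathcal{M}_c$ coincide with the barred ones on $\mathcal S_1,\mathcal S_2$, because $\mathbb{V}L_i\mathbb{V}=L_i$ and $L_i\oplus L_i$ annihilates nothing extra there. Their spectrum consists of one zero eigenvalue plus eigenvalues in the open right half-plane. The zero eigenvector $\boldsymbol{\pi}_{ss}\otimes\mathbf{1}$ is not in $\mathcal{S}_1$ (respectively $\mathcal S_2$).

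Care is needed for two reasons. First, an eigenvalue $0$ could in principle reappear on the restriction via a generalized eigenvector. To exclude this, we would use the null-space argument directly: a vector $\mathbf{v}\in\mathcal S$ with $\mathcal N_c^\top \mathbf v=0$ must lie in the span of $\mathbf 1$, which meets $\mathcal S$ only at zero. Combined with invariance of $\mathcal S$ and the fact that $\mathcal S$ is also invariant under the transpose, this rules out zero on the restriction. Second, the second-moment matrix is not symmetric. For it we would instead invoke the vectorized form $\mathbb{V}L_i\mathbb{V}\oplus\mathbb{V}L_i\mathbb{V} = (\mathbb V\otimes\mathbb V)(L_i\oplus L_i)(\mathbb V\otimes\mathbb V)$ and reuse the analogous Laplacian-graph argument of Lemma~\ref{lemma_noisy_consensus_eig}. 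This argument shows the nullity of $\widebar{\mathcal M}_c$ is exactly $\dim\mathcal S_2^\perp$ and that its null space is exactly $\mathcal S_2^\perp$, which is complementary to $\mathcal{S}_2$.

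Given stability on the invariant subspaces, the conclusions follow. First, $\bar{\boldsymbol\nu}(t)\to 0$, which gives~\eqref{eq:mean_consensus}. Second, since $\boldsymbol{\pi}(t)\to\boldsymbol{\pi}_{ss}$ exponentially, $\bar{\mathbf c}(t)$ converges to the unique solution in $\mathcal S_2$ of $\widebar{\mathcal M}_c\bar{\mathbf c}_{ss} = \boldsymbol\pi_{ss}\otimes\text{vec}(\mathbb VQ\mathbb V)$. Because $\widebar{\mathcal M}_c$ is invertible on $\mathcal S_2$ and vanishes on a complement, this solution is the one given by the pseudoinverse. The pseudoinverse identification requires $\mathcal{S}_2 = \text{Range}(\widebar{\mathcal M}_c)$ to be orthogonal to $\text{Null}(\widebar{\mathcal M}_c)$. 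We expect this to hold because $\widebar{\mathcal M}_c$ is sandwiched by the orthogonal projector $\mathbb{I}_n\otimes\mathbb V\otimes\mathbb V$; we would verify it explicitly, and this yields~\eqref{eq:second_moment_consensus}.
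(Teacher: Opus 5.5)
\textbf{The gap is in the pseudoinverse step.} It rests on the identity $\mathbb V L_i\mathbb V\oplus\mathbb V L_i\mathbb V=(\mathbb V\otimes\mathbb V)(L_i\oplus L_i)(\mathbb V\otimes\mathbb V)$, and that identity is false. Since $\mathbb V L_i\mathbb V=L_i$, the left side is $L_i\otimes\mathbb I_m+\mathbb I_m\otimes L_i$. The right side is $L_i\otimes\mathbb V+\mathbb V\otimes L_i$. On $\mathbf 1_m\otimes\mathbf w$ with $\mathbf 0\neq\mathbf w\perp\mathbf 1_m$, the former gives $\mathbf 1_m\otimes L_i\mathbf w\neq\mathbf 0$, while the latter gives $\mathbf 0$. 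In $\widebar{\mathcal M}_c$, only the coupling term $\Gamma^\top\otimes\mathbb V\otimes\mathbb V$ is sandwiched by the projector. This has three consequences:
\begin{itemize}
\item $\widebar{\mathcal M}_c$ does not vanish on $\mathcal S_2^\perp$; it acts there as $\text{diag}_n(L_i\oplus L_i)$.
\item Its null space is $\mathbb R^n\otimes\mathbf 1_{m^2}$, of dimension $n$, not $\dim\mathcal S_2^\perp=n(2m-1)$.
\item Its range is $(\mathbb R^n\otimes\mathbf 1_{m^2})^\perp$, which strictly contains $\mathcal S_2$.
\end{itemize}
So your claims that $\text{Null}(\widebar{\mathcal M}_c)=\mathcal S_2^\perp$, that $\text{Range}(\widebar{\mathcal M}_c)=\mathcal S_2$, and that $\widebar{\mathcal M}_c$ is sandwiched by $\mathbb I_n\otimes\mathbb V\otimes\mathbb V$ are all wrong. ``Verifying explicitly'' would refute them rather than confirm them. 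The paper's proof has the mirror-image defect: it asserts that $\widebar{\mathcal M}_c$ has a single zero eigenvalue with eigenvector $\boldsymbol\pi_{\text{ss}}\otimes\mathbf 1_{m^2}$. It then simply ``solves'' for $\bar{\mathbf c}_{\text{ss}}$ without justifying the pseudoinverse.

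\textbf{The repair is short.} Because $L_i\mathbb V=\mathbb V L_i$, the matrix $\widebar{\mathcal M}_c$ commutes with the orthogonal projector $\mathbb I_n\otimes\mathbb V\otimes\mathbb V$. It is therefore block diagonal with respect to the orthogonal splitting $\mathcal S_2\oplus\mathcal S_2^\perp$, and so is its Moore--Penrose inverse. Hence, for $\mathbf b=\boldsymbol\pi_{\text{ss}}\otimes\text{vec}(\mathbb VQ\mathbb V)\in\mathcal S_2$, you get $\widebar{\mathcal M}_c^+\mathbf b=(\widebar{\mathcal M}_c|_{\mathcal S_2})^{-1}\mathbf b$, which is exactly your limit. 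Equivalently, the correct criterion has two parts: $\mathbf b\in\text{Range}(\widebar{\mathcal M}_c)$, and the solution lying in $\mathcal S_2$ is orthogonal to $\text{Null}(\widebar{\mathcal M}_c)=\mathbb R^n\otimes\mathbf 1_{m^2}\subset\mathcal S_2^\perp$. You do not need $\text{Range}(\widebar{\mathcal M}_c)=\mathcal S_2$, which is false anyway.

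The rest of your plan is sound and follows the paper's route: specialize Proposition~\ref{prop_ct_cov_dyn}, show the zero eigenvalues never affect the trajectory, and set the derivative to zero. Your stability step is more rigorous than the paper's. Restricting $\mathcal N_c$ and $\mathcal M_c$ to the invariant subspaces $\mathcal S_1$ and $\mathcal S_2$, which meet the one-dimensional kernels only at zero, settles both the mean and the non-symmetric second-moment case. Your worries about generalized eigenvectors and the transposed null space are unnecessary. An eigenvalue of a restriction to an invariant subspace always has a genuine eigenvector inside that subspace.
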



\begin{proof} 
Recall that $\bar{\mathbf{x}} = \mathbb{V} \mathbf{x}$ and $\mathbb{V} = \mathbb{V}^2$. Therefore, $\bar{\boldsymbol{\nu}}^i = \mathbb{V} \bar{\boldsymbol{\nu}}^i$ and $\bar{\mathbf{c}}^i = (\mathbb{V} \otimes \mathbb{V}) \bar{\mathbf{c}}^i$. Specializing Proposition~\ref{prop_ct_cov_dyn}(i) to~\eqref{eq:disagreement} gives
\begin{align*}
    \dot{\bar{\boldsymbol{\nu}}}(t) = - \widebar{\mathcal{N}}_c {\bar{\boldsymbol{\nu}}}(t),
\end{align*} 
where $\widebar{\mathcal{N}}_c = \text{diag}_n(\mathbb{V} L_i \mathbb{V}) - \Gamma^\top \otimes \mathbb{V}$. First, note that $\mathbb{V} L_i \mathbb{V}$ is also a Laplacian matrix and $\mathbb{V} = \mathbb{I}_m - \frac{1}{m}\mathbf{1}_m \mathbf{1}_m^\top$. Then, by the same logic used in the proof of Lemma~\ref{lemma_noisy_consensus_eig}, $X = \text{diag}_n(\mathbb{V} L_i \mathbb{V}) - \Gamma^\top \otimes \mathbb{I}_m$ has exactly one eigenvalue at $0$ with the eigenvector $\boldsymbol{\pi}_{\text{ss}} \otimes \mathbf{1}_m$ and all others in the right half-plane. Since the CTMC is ergodic, the remaining term in $\widebar{\mathcal{N}}_c$, which is $\Gamma^\top \otimes (\frac{1}{m}\mathbf{1}_m \mathbf{1}_m^\top)$, has exactly $n - 1$ non-zero eigenvalues, all of which are positive and map to the null space of $X$. This result means that the null space of $\widebar{\mathcal{N}}_c$ is equivalent to that of $X$, i.e., $\boldsymbol{\pi}_{\text{ss}} \otimes \mathbf{1}_m$, and its non-zero eigenvalues lie in the right half-plane. Second, recall that $\bar{\mathbf{x}} = (\mathbb{I}_m - \frac{1}{m} \mathbf{1}_m  \mathbf{1}_m^\top) \mathbf{x}$. Since any component of $\mathbf{x}$ along the direction $\mathbf{1}_m$ is removed to get $\bar{\mathbf{x}}$, by its definition, ${\bar{\boldsymbol{\nu}}}$ has no component along $\boldsymbol{\pi}_{\text{ss}} \otimes \mathbf{1}_m$. Thus, the eigenvalue at 0 is inconsequential, and ${\bar{\boldsymbol{\nu}}}$ has a steady-state solution. As all other eigenvalues lie in the right half-plane, the steady-state solution is ${\bar{\boldsymbol{\nu}}}_{\text{ss}} = \mathbf{0}_{nm}$, as stated in Proposition \ref{prop_noisy_consensus_cov_ss}(i).

Similarly, specializing Proposition~\ref{prop_ct_cov_dyn}(ii) to~\eqref{eq:disagreement} gives
\begin{align*}
    \dot{\bar{\mathbf{c}}}(t) = - \widebar{\mathcal{M}}_c \bar{\mathbf{c}}(t) + \boldsymbol{\pi}(t) \otimes \text{vec}(\mathbb{V} Q \mathbb{V}).
\end{align*}
where $\widebar{\mathcal{M}}_c = \text{diag}_n(\mathbb{V} L_i \mathbb{V} \oplus \mathbb{V} L_i \mathbb{V}) - \Gamma^\top \otimes \mathbb{V} \otimes \mathbb{V}$. It can be shown analogously to $\widebar{\mathcal{N}}_c$ that $\widebar{\mathcal{M}}_c$ has exactly one eigenvalue at $0$ with eigenvector $\boldsymbol{\pi}_{\text{ss}} \otimes \mathbf{1}_{m^2}$ that is inconsequential to the evolution of $\bar{\mathbf{c}}$ and that all other eigenvalues of $\widebar{\mathcal{M}}_c$ lie in the right half-plane, meaning that there is a steady-state solution for $\bar{\mathbf{c}}(t)$. At steady state, $\dot{\bar{\mathbf{c}}}_{\text{ss}} = \mathbf{0}_{nm^2}$ and $\boldsymbol{\pi}(t) = \boldsymbol{\pi}_{\text{ss}}$. Solving for ${\bar{\mathbf{c}}}_{\text{ss}}$ gives the result that is stated in Proposition \ref{prop_noisy_consensus_cov_ss}(ii).
\end{proof}


These results imply that the steady-state second moment of the disagreement vector is equal to the steady-state covariance. 
As a result, the system and $i$-th node errors of~\eqref{eq:noisy-consensus-switching} computed on the disagreement subspace are $\mathbf{h}_S {\mathbf{\bar c}}_{\text{ss}}$  and $\mathbf{h}_N^i {\mathbf{\bar c}}_{\text{ss}}$, respectively.


\subsection{Noisy leader-follower reference tracking under MSGs} \label{sec:joint_centrality}

Let $\mathcal N_k$ and $\mathcal M_k$ be the system matrices in~\eqref{eq:dyn_mean} and~\eqref{eq:dyn_second_moment} after specializing Proposition~\ref{prop_ct_cov_dyn} to the CT MJLS \eqref{eq:leader-follower-reference-2}. 

\begin{lemma}
For the continuous-time noisy leader-follower reference tracking  dynamics~\eqref{eq:leader-follower-reference-2} under Assumptions~\ref{as:ergodic} and \ref{as:connected}, all eigenvalues of matrices $\mathcal{N}_k$ and $\mathcal{M}_k$ lie strictly in the right half-plane.
 \label{lemma_ref_tracking_eig}
\end{lemma}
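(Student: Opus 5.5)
Specializing Proposition~\ref{prop_ct_cov_dyn} to \eqref{eq:leader-follower-reference-2} gives $Z_i = M_i = L_i + K$, so
\begin{align*}
\mathcal{N}_k = \text{diag}_n(L_i + K) - \Gamma^\top \otimes \mathbb{I}_m, \qquad \mathcal{M}_k = \text{diag}_n\big((L_i+K)\oplus(L_i+K)\big) - \Gamma^\top \otimes \mathbb{I}_{m^2}.
\end{align*}
The plan is to write each as a singular Laplacian-type matrix plus a nonnegative diagonal perturbation. Concretely, $\mathcal{N}_k = \mathcal{N}_c + \mathbb{I}_n \otimes K$ and $\mathcal{M}_k = \mathcal{M}_c + \text{diag}_n(K \oplus K)$, where $\mathcal{N}_c,\mathcal{M}_c$ are the consensus matrices of Lemma~\ref{lemma_noisy_consensus_eig}. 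I will then argue that the added diagonal term removes the zero eigenvalue without creating any eigenvalue in the closed left half-plane.

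The key tool is the transpose. As in the proof of Lemma~\ref{lemma_noisy_consensus_eig}, $\mathcal{N}_c^\top$ is a (directed) Laplacian: its off-diagonal entries are nonpositive and its row sums are zero. Adding the nonnegative diagonal $D_K=\mathbb{I}_n\otimes K$ gives $\mathcal{N}_k^\top$, which is a Z-matrix with nonnegative row sums.
\begin{itemize}
\item By Gershgorin's theorem, every eigenvalue of $\mathcal{N}_k^\top$ lies in a disc centered at a real diagonal entry $a_{jj}$ with radius $\sum_{l\neq j}|a_{jl}|\le a_{jj}$.
\item Hence every eigenvalue has nonnegative real part, and the only point of the closed left half-plane that can be reached is $0$.
\item It therefore suffices to show $\mathcal{N}_k$ is nonsingular. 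Then no eigenvalue is $0$, and none can lie elsewhere on the imaginary axis because the discs touch it only at the origin.
\end{itemize}

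To prove nonsingularity I will use irreducible diagonal dominance. The matrix $\mathcal{N}_k^\top$ is irreducible: the Laplacian part couples all $nm$ nodes, since within each block $L_i$ is connected and $\Gamma$ couples the blocks (ergodicity). It is weakly diagonally dominant, and strictly so in the rows $(i,j)$ with $j$ a leader, where the row sum is $\kappa>0$. This requires $|\mathcal{K}|\ge 1$, which is implicit in the problem. By Taussky's theorem an irreducibly diagonally dominant matrix is nonsingular. Alternatively, one can use the same null-space intersection idea as the paper: a vector $\mathbf{v}$ with $\mathcal{N}_k^\top\mathbf{v}=0$ gives $\mathbf{v}^\top$-weighted sums that force $\mathbf{v}$ into $\text{Null}(\mathcal{N}_c^\top)=\text{span}(\mathbf{1}_{nm})$, and then the leader entries force $\mathbf{v}=0$. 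The Taussky route is cleaner, so I would use it.

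For $\mathcal{M}_k$ the same argument applies, with $L_i\oplus L_i$ the Laplacian of the connected Cartesian product graph on $m^2$ nodes and the diagonal $K\oplus K$ nonnegative and positive at entries $(j,l)$ with $j$ or $l$ a leader. The main obstacle is the irreducibility and null-space step: making precise that the directed graph of $\mathcal{M}_c^\top$ is strongly connected. This follows from connectivity of the Cartesian product of connected graphs together with the strong connectivity of the CTMC. I would state that explicitly, noting that diagonal entries of $\Gamma$ only shift the diagonal and do not affect the sign pattern.
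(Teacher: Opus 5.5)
Your proposal is correct and follows the paper's proof. Both write $\mathcal{N}_k = \mathcal{N}_c + \mathbb{I}_n \otimes K$ and $\mathcal{M}_k = \mathcal{M}_c + \mathbb{I}_n \otimes (K \oplus K)$ as nonnegative diagonal perturbations of the Laplacian-type matrices from Lemma~\ref{lemma_noisy_consensus_eig}, and you rightly note that $|\mathcal{K}| \geq 1$ is needed. The paper only asserts that this perturbation pushes the spectrum into the open right half-plane, whereas you justify it: Gershgorin gives the closed half-plane, and Taussky's theorem on irreducibly diagonally dominant matrices gives nonsingularity. Your use of irreducibility, rather than just the one-dimensional null space the paper cites, is the right hypothesis, because for a nonsymmetric Laplacian nullity one alone does not guarantee that a nonzero nonnegative diagonal perturbation removes the zero eigenvalue.
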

\begin{proof}
When specializing Proposition~\ref{prop_ct_cov_dyn} to the CT MJLS~\eqref{eq:leader-follower-reference-2}, $Z_i = L_i + K = M_i$, and \eqref{eq:generic_mjls} reduces to~\eqref{eq:noisy-consensus-switching}. Consequently, $\mathcal{N}_k = \text{diag}_n(M_i) - \Gamma^\top \otimes \mathbb{I}_m$ and $\mathcal{M}_k = \text{diag}_n(M_i \oplus M_i) - \Gamma^\top \otimes \mathbb{I}_{m^2}$. It follows that $\mc N_k = \mc N_c + I_{n} \otimes K$. From Lemma~\ref{lemma_noisy_consensus_eig}, $\mc N_c^\top$ is a Laplacian matrix with the one-dimensional null space $\bs 1_{nm}$. Thus, $\mc N_k$ is a diagonal perturbation of $\mc N_c$ and has all eigenvalues strictly in the right half-plane as long as $|\mathcal{K}| > 0$. 

It can be shown analogously that all eigenvalues of $\mathcal{M}_k$ also lie strictly in the right half-plane.
\end{proof}

Thus, in contrast to the noisy consensus problem, the noisy leader-follower reference tracking problem does not require the separation of consensus and disagreement subspaces since the presence of leaders removes the eigenvalue at $0$ from $\mathcal{N}_k$ and $\mathcal{M}_k$ as well as that at $1$ from $\mathcal{H}_k$ and $\mathcal{A}_k$. For the following propositions, let the specialization of $\bs{\nu}(t)$ and $\mathbf{c}(t)$ in~\eqref{eq:dyn_mean} and~\eqref{eq:dyn_second_moment} to the CT MJLS \eqref{eq:leader-follower-reference-2} be $\bs{\hat \nu}(t)$ and $\mathbf{\hat c}(t)$, respectively; let the specialization of $\bs{\nu}(k)$ and $\mathbf{c}(k)$ in~\eqref{eq:dyn_mean_dt} and~\eqref{eq:dyn_second_moment_dt} to the DT MJLS \eqref{eq:leader-follower-reference-dt} be $\bs{\hat \nu}(k)$ and $\mathbf{\hat c}(k)$, respectively.  

\begin{proposition} For the leader-follower reference tracking dynamics~\eqref{eq:leader-follower-reference-2} with $|\mathcal{K}| > 0$ and under Assumptions~\ref{as:ergodic}~and~\ref{as:connected}, the following statements hold:
\begin{enumerate}
    \item the steady-state mean of the state vector is zero, i.e.,
    \begin{align}\label{eq:mean_refer}
    {\boldsymbol{\hat \nu}}_{\text{ss}} = \mathbf{0}_{nm};
    \end{align}
    \item the steady-state second moment of the state vector is
     \begin{align}
    {\mathbf{\hat c}}_{\text{ss}} = \mathcal{M}_k^{-1} (\boldsymbol{\pi}_{\text{ss}} \otimes \text{vec}(Q)),
    \label{eq:second_moment_tracking}
\end{align}
where $\mathcal{M}_k = \text{diag}_n(M_i \oplus M_i) - \Gamma^\top \otimes \mathbb{I}_{m^2}$.
\end{enumerate}
 \label{prop_ref_tracking_cov_ss}
\end{proposition}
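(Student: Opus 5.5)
The plan is to specialize Proposition~\ref{prop_ct_cov_dyn} to the CT MJLS~\eqref{eq:leader-follower-reference-2} and then use the spectral information from Lemma~\ref{lemma_ref_tracking_eig}. Setting $Z_i = M_i = L_i + K$ in Proposition~\ref{prop_ct_cov_dyn} gives
\begin{align*}
\dot{\hat{\boldsymbol{\nu}}}(t) = -\mathcal{N}_k \hat{\boldsymbol{\nu}}(t), \qquad \dot{\hat{\mathbf{c}}}(t) = -\mathcal{M}_k \hat{\mathbf{c}}(t) + \boldsymbol{\pi}(t)\otimes \text{vec}(Q),
\end{align*}
with $\mathcal{N}_k = \text{diag}_n(M_i) - \Gamma^\top\otimes\mathbb{I}_m$ and $\mathcal{M}_k$ as in the statement. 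Both parts then reduce to standard facts about linear time-invariant systems whose system matrix is Hurwitz, here $-\mathcal{N}_k$ and $-\mathcal{M}_k$.

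For part (i), Lemma~\ref{lemma_ref_tracking_eig} states that every eigenvalue of $\mathcal{N}_k$ has strictly positive real part. Hence $-\mathcal{N}_k$ is Hurwitz, and $\hat{\boldsymbol{\nu}}(t) = e^{-\mathcal{N}_k t}\hat{\boldsymbol{\nu}}(0) \to \mathbf{0}_{nm}$ for every initial condition. This gives~\eqref{eq:mean_refer}. Unlike the consensus case in Proposition~\ref{prop_noisy_consensus_cov_ss}, no projection onto a disagreement subspace is needed, because there is no zero eigenvalue.

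For part (ii), I would write the variation-of-constants formula
\begin{align*}
\hat{\mathbf{c}}(t) = e^{-\mathcal{M}_k t}\hat{\mathbf{c}}(0) + \int_0^t e^{-\mathcal{M}_k (t-s)}\big(\boldsymbol{\pi}(s)\otimes\text{vec}(Q)\big)\,ds .
\end{align*}
The first term vanishes because $-\mathcal{M}_k$ is Hurwitz by Lemma~\ref{lemma_ref_tracking_eig}. For the integral, split the input as $\boldsymbol{\pi}(s) = \boldsymbol{\pi}_{\text{ss}} + (\boldsymbol{\pi}(s)-\boldsymbol{\pi}_{\text{ss}})$.
\begin{itemize}
\item The constant part contributes $\int_0^t e^{-\mathcal{M}_k \tau}\,d\tau\,(\boldsymbol{\pi}_{\text{ss}}\otimes\text{vec}(Q))$. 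This tends to $\mathcal{M}_k^{-1}(\boldsymbol{\pi}_{\text{ss}}\otimes\text{vec}(Q))$, and $\mathcal{M}_k$ is invertible since $0$ is not an eigenvalue.
\item The transient part is a convolution of an exponentially decaying kernel with an input that tends to zero. By Assumption~\ref{as:ergodic}, $\boldsymbol{\pi}(s)\to\boldsymbol{\pi}_{\text{ss}}$, and in fact exponentially fast for a finite ergodic CTMC. The convolution therefore also tends to zero, by a standard dominated-convergence or bounding argument.
\end{itemize}
Hence $\hat{\mathbf{c}}_{\text{ss}}$ exists and equals~\eqref{eq:second_moment_tracking}. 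Equivalently, one can set $\dot{\hat{\mathbf{c}}}=0$ with $\boldsymbol{\pi}=\boldsymbol{\pi}_{\text{ss}}$ and solve, once existence of the limit is known.

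The only step needing care is this convergence of the convolution term with the time-varying input $\boldsymbol{\pi}(t)$, since the steady-state value is otherwise immediate. I would handle it by bounding $\|e^{-\mathcal{M}_k\tau}\|\le c\,e^{-\lambda\tau}$ and splitting the integral at $t/2$. I would also note that, because the mean vanishes, the steady-state second moment coincides with the steady-state covariance. This justifies using $\mathbf{h}_S\hat{\mathbf{c}}_{\text{ss}}$ and $\mathbf{h}_N^i\hat{\mathbf{c}}_{\text{ss}}$ as the system and node errors.
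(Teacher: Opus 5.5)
Your proposal is correct and follows the paper's route. Like the paper, you specialize Proposition~\ref{prop_ct_cov_dyn} with $Z_i = M_i$, use Lemma~\ref{lemma_ref_tracking_eig} to get that $-\mathcal{N}_k$ and $-\mathcal{M}_k$ are Hurwitz, and solve the stationary equations; your variation-of-constants treatment of the time-varying input $\boldsymbol{\pi}(t)$ adds rigor by justifying the existence of the limit, which the paper simply asserts.
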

\begin{proof}
    For the CT MJLS \eqref{eq:leader-follower-reference-2}, all eigenvalues of ${\mathcal{N}_k}$ lie strictly in the right half-plane, meaning that there is a steady-state solution for $\hat{\boldsymbol{\nu}}(t)$. Since $\dot{\hat{\boldsymbol{\nu}}}(t) = \mathbf{0}_{nm}$ at steady state, it must be true that $\hat {\boldsymbol{\nu}}_{\text{ss}} = \mathbf{0}_{nm}$ as stated in in Proposition \ref{prop_ref_tracking_cov_ss}(i). Similarly, since the eigenvalues of ${\mathcal{M}_k}$ also lie strictly in the right half-plane, there exists a steady-state solution for $\hat{\mathbf{c}}(t)$. At steady state, $\dot{\hat{\mathbf{c}}}(t) = \mathbf{0}_{nm^2}$ and $\boldsymbol{\pi}(t) = \boldsymbol{\pi}_{\text{ss}}$. Solving for $\hat{\mathbf{c}}(t)$ under these conditions yields the result for $\hat{\mathbf{c}}_{\text{ss}}$ stated in Proposition \ref{prop_ref_tracking_cov_ss}(ii).
\end{proof}


As in the previous subsection, only the steady-state second moment is needed to determine the system and $i$-th node errors of \eqref{eq:leader-follower-reference-2}, which are given by $\mathbf{h}_S {\mathbf{\hat c}}_{\text{ss}}$  and $\mathbf{h}_N^i {\mathbf{\hat c}}_{\text{ss}}$, respectively.

\medskip
Equations \eqref{eq:second_moment_consensus} 
and \eqref{eq:second_moment_tracking} 
express the relationship between performance and graph structure. Understanding these results can reveal how the graph topologies and switching behavior, which are encoded in $\mathcal{M}$ and $\mathcal{A}$, affect how well the MSG propagates information. However, unlike  $\boldsymbol{\pi}_{\text{ss}}$ and $Q$, the effects of $\mathcal{M}$ and $\mathcal{A}$ on error are difficult to interpret due to the inverse matrices in 
\eqref{eq:second_moment_consensus} 
and \eqref{eq:second_moment_tracking}. 
Intuitively, these matrices contain the most information since it is the inverse operation that, in effect, performs the mixing of graphs in $\mathbb{G}$ according to $\Gamma$ and $P$.


\section{Robustness and leadership indices for MSGs}
\label{sec:remark_err}

The robustness of a system is measured by its deviation from the desired result. The deviation from consensus is isolated by projecting the system state onto the disagreement subspace. Proposition \ref{prop_noisy_consensus_cov_ss} shows that for noisy consensus under MSGs, this disagreement state is a stochastic process that, in the limit $t \rightarrow +\infty$, has zero mean and finite covariance. The component of covariance along consensus subspace is disregarded because it is completely correlated and thus does not correspond to deviation from consensus (i.e., the covariance of  $\mathbf{x}(t)$ along the consensus subspace is spanned by $\bs 1_m \bs 1_m^\top$).

As discussed in \cite{young2010robustness}, the trace of the steady-state disagreement covariance, which we define as the system error, measures the mean squared distance of the system state from consensus and, for a fixed undirected graph, corresponds to the $\mc H_2$ norm of~\eqref{eq:disagreement}~\cite{barooah2008estimation,GFY-LS-NEL:13}. For a fixed graph, this value has been linked to graph resistances and used to assess the robustness of consensus.  In a similar spirit, we propose a new notion of \emph{robustness of consensus for Markov switching graphs} defined by $1/(\mathbf{h}_S \bar{\mathbf{c}}_{\text{ss}})$ and call it the \emph{system certainty index}. This measure quantifies the efficacy of the MSGs in achieving consensus and allows for the ordering of MSGs by performance. As in ~\cite{young2011rearranging}, it can help in designing or dynamically rearranging MSG topologies for improved performance.  
 
The noisy consensus dynamics~\eqref{eq:noisy-consensus-switching} is a continuum approximation to evidence aggregation in fixed-sample and sequential decision-making. 
Furthermore, the variance of each agent's disagreement state, given by $\bar x_i$, reflects its decision-making accuracy \cite{poulakakis2016information} and illustrates the speed-accuracy trade-off \cite{VS-NEL:13f}. Accordingly, the inverse of the steady-state variance of $\bar x_i$ is described as the \emph{node certainty index} in~\cite{poulakakis2016information}. For a fixed graph, it was shown in~\cite{poulakakis2016information} that the node certainty index is a monotone function of the node's information centrality \cite{KS-MZ:89}. In a similar spirit, we introduce a novel \emph{notion of node certainty index for MSGs} defined by $1/(\mathbf{h}_N^i \bar{\mathbf{c}}_{\text{ss}})$ for the node $i$. The node certainty index can be used to order the nodes in an MSG based on their accuracy in decision-making.


The system error in the noisy leader-follower tracking is the mean squared tracking error for all agents. Thus, the system error is a measure of the robustness of tracking in the presence of noise. 
The choice of nodes selected as leaders largely dictates the tracking performance. Consequently, the leader selection problem has received significant attention for fixed noisy consensus networks \cite{patterson2010leader,lin2014algorithms,AC-LB-RP:12, KF-NEL:15,patterson2016optimal}. For the fixed graph with $|\mathcal{K}| = 1$, \cite{KF-NEL:15} shows that the most information-central node minimizes the system error when assigned as the leader. The system error due to multiple leaders was used to define the notion of joint centrality for multiple nodes, and its relation to network resistances is explored in \cite{KF-NEL:15}. 
The system error is used to define centrality measures for fixed consensus networks in~\cite{siami2017centrality}. In a similar spirit, we define \emph{joint robustness centrality of a set of leaders $\mathcal{K}$ in MSGs} as the inverse of the system error for~\eqref{eq:leader-follower-reference-2} with leader set $\mc K$ and in the limit $\kappa \to +\infty$, i.e.,   $ \lim_{\kappa \to + \infty} 1/(\mathbf{h}_S {\mathbf{\hat c}}_{\text{ss}})$.

\section{Analysis of the Two-Topology MSG} \label{sec:two-top-msg}

In this section, we dissect the results for the simple two-topology continuous-time MSG in order to understand the mechanisms affecting the performance of a switching network. 

Let $n = 2$. Then, the generator matrix is
\begin{align}
    \Gamma = \begin{bmatrix}
        -v_1 & v_1 
        \\
        v_2 & -v_2
    \end{bmatrix} = \beta \begin{bmatrix}
        -{\pi}_{\text{ss},2} & {\pi}_{\text{ss},2} 
        \\
        {\pi}_{\text{ss},1} & -{\pi}_{\text{ss},1}
    \end{bmatrix} ,
    \label{generator_2x2}
\end{align}
where $\beta = v_1+v_2$ and $\boldsymbol{\pi}_\text{ss} = \frac{1}{\beta} [v_2 \hspace{5pt} v_1]^\top$ is the stationary distribution. Let $\alpha = \beta {\pi}_{\text{ss},1}{\pi}_{\text{ss},2}$.

Furthermore, let $\mathbb{L}_i = (\mathbb{V}L_i\mathbb{V} \oplus \mathbb{V}L_i \mathbb{V})^+$. Then, by Proposition \ref{prop_noisy_consensus_cov_ss}, $\bar{\mathbf{c}}_{\text{ss},i} = \mathbb{L}_i \text{vec}(\mathbb{V}Q\mathbb{V})$ is the vectorized disagreement covariance for the single-topology static graph $G_i$ with dynamics \eqref{eq:noisy-consensus-switching}. Lastly, we define:
\begin{align*}
    \widebar{\text{E}}_{\text{static}} &= \mathbf{h}_{S,1}({\pi}_{\text{ss},1} \bar{\mathbf{c}}_{\text{ss},1} + {\pi}_{\text{ss},2} \bar{\mathbf{c}}_{\text{ss},2}) \\
    \widebar{\Omega} &= (\mathbb{V} \otimes \mathbb{V})({\pi}_{\text{ss},2} \mathbb{L}_1 + {\pi}_{\text{ss},1} \mathbb{L}_2) \\
    \bar{\mathbf{c}}_{\text{diff}} &= (\mathbb{V} \otimes \mathbb{V}) (\bar{\mathbf{c}}_{\text{ss},2} - \bar{\mathbf{c}}_{\text{ss},1}) \\
    \overline{\mathbf{h}}_{S,\text{diff}} &= \mathbf{h_{S,1}} (\mathbb{L}_2 - \mathbb{L}_1).
\end{align*}

\begin{proposition}\label{prop:ct-2-topology}
For $n = 2$, $\beta$ sufficiently small, and under Assumptions~\ref{as:ergodic}~and~\ref{as:connected}, the system error of the noisy consensus CT MJLS \eqref{eq:noisy-consensus-switching} computed on the disagreement subspace is 
\begin{align}
\text{E} &= \widebar{\text{E}}_\text{static} - \alpha \overline{\mathbf{h}}_{S,\text{diff}} (\mathbb{I}_{m^2} + \beta \widebar{\Omega})^{-1} \bar{\mathbf{c}}_{\text{diff}}.       
\end{align}
Substituting $\mathbf{h}_{N}^i$ for $\mathbf{h}_{S}$ yields an equivalent equation for the $i$-th node error. These results follow from Proposition \ref{prop_noisy_consensus_cov_ss}. \label{two-top-prop}
\end{proposition}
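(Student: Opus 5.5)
We start from Proposition~\ref{prop_noisy_consensus_cov_ss}(ii). For $n=2$ we have $\mathbf{h}_S = \mathbf{1}_2^\top\otimes \mathbf{h}_{S,1}$, so $\text{E} = \mathbf{h}_{S,1}(\bar{\mathbf{c}}^1_{\text{ss}}+\bar{\mathbf{c}}^2_{\text{ss}})$, where $\bar{\mathbf{c}}^i_{\text{ss}}$ are the two blocks of $\bar{\mathbf{c}}_{\text{ss}}$.

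Write $A_i = \mathbb{V}L_i\mathbb{V}\oplus\mathbb{V}L_i\mathbb{V}$, $P=\mathbb{V}\otimes\mathbb{V}$ and $\mathbf{q}=\text{vec}(\mathbb{V}Q\mathbb{V})$. Using \eqref{generator_2x2}, the stationary equation $\widebar{\mathcal{M}}_c\bar{\mathbf{c}}_{\text{ss}} = \boldsymbol{\pi}_{\text{ss}}\otimes\mathbf{q}$ becomes the coupled block system
\begin{align*}
A_1\bar{\mathbf{c}}^1 + \beta\pi_{\text{ss},2}P\bar{\mathbf{c}}^1 - \beta\pi_{\text{ss},1}P\bar{\mathbf{c}}^2 &= \pi_{\text{ss},1}\mathbf{q},\\
A_2\bar{\mathbf{c}}^2 - \beta\pi_{\text{ss},2}P\bar{\mathbf{c}}^1 + \beta\pi_{\text{ss},1}P\bar{\mathbf{c}}^2 &= \pi_{\text{ss},2}\mathbf{q}.
\end{align*}
All unknowns lie in the range of $P$, and $\mathbb{L}_i = A_i^+$ acts as a genuine inverse on the relevant invariant subspace (disagreement $\otimes$ disagreement). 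This is the first technical point to check: we use that $\mathbb{V}L_i\mathbb{V}$ is invertible on $\mathbf{1}_m^\perp$ and that $\mathbf{q}$ lies in the range of $A_i$. Under this, we apply $\mathbb{L}_i$ to each equation and introduce the auxiliary variables
\[
\mathbf{s} = \pi_{\text{ss},2}\bar{\mathbf{c}}^1 - \pi_{\text{ss},1}\bar{\mathbf{c}}^2, \qquad \mathbf{t} = \bar{\mathbf{c}}^1+\bar{\mathbf{c}}^2.
\]
The two equations then read $\bar{\mathbf{c}}^1 = \pi_{\text{ss},1}\bar{\mathbf{c}}_{\text{ss},1} - \beta\mathbb{L}_1P\mathbf{s}$ and $\bar{\mathbf{c}}^2 = \pi_{\text{ss},2}\bar{\mathbf{c}}_{\text{ss},2} + \beta\mathbb{L}_2P\mathbf{s}$.

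Adding these gives $\mathbf{t} = \pi_{\text{ss},1}\bar{\mathbf{c}}_{\text{ss},1}+\pi_{\text{ss},2}\bar{\mathbf{c}}_{\text{ss},2} + \beta(\mathbb{L}_2-\mathbb{L}_1)P\mathbf{s}$. Applying $\mathbf{h}_{S,1}$ then yields $\text{E} = \widebar{\text{E}}_{\text{static}} + \beta\,\overline{\mathbf{h}}_{S,\text{diff}}P\mathbf{s}$, where we must also check that $P$ commutes past $\mathbb{L}_i$ appropriately.

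Next, forming $\pi_{\text{ss},2}\bar{\mathbf{c}}^1-\pi_{\text{ss},1}\bar{\mathbf{c}}^2$ gives a closed equation for $\mathbf{s}$:
\[
\mathbf{s} = \pi_{\text{ss},1}\pi_{\text{ss},2}(\bar{\mathbf{c}}_{\text{ss},1}-\bar{\mathbf{c}}_{\text{ss},2}) - \beta(\pi_{\text{ss},2}\mathbb{L}_1+\pi_{\text{ss},1}\mathbb{L}_2)P\mathbf{s}.
\]
Multiplying by $P$ (which is idempotent and leaves $\bar{\mathbf{c}}_{\text{ss},i}$ fixed) converts this into $(\mathbb{I}_{m^2}+\beta\widebar{\Omega})P\mathbf{s} = -\pi_{\text{ss},1}\pi_{\text{ss},2}\bar{\mathbf{c}}_{\text{diff}}$. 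Here we need $P\mathbb{L}_i = \mathbb{L}_iP$, which holds because $A_i$ commutes with $P$ and therefore so does its pseudoinverse; this is why $\widebar{\Omega}$ is defined with $P$ on the left.

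For $\beta$ sufficiently small, $\|\beta\widebar{\Omega}\|<1$, so $\mathbb{I}_{m^2}+\beta\widebar{\Omega}$ is invertible (Neumann series). This is exactly where the smallness hypothesis is used, although any $\beta$ for which the matrix is nonsingular would suffice. Solving for $P\mathbf{s}$, substituting into the expression for $\text{E}$, and using $\alpha = \beta\pi_{\text{ss},1}\pi_{\text{ss},2}$ gives the stated formula. The node-error version follows verbatim with $\mathbf{h}_{N,1}^i$ in place of $\mathbf{h}_{S,1}$.

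The main obstacle is the pseudoinverse bookkeeping. We must justify rigorously that applying $\mathbb{L}_i$ recovers $P\bar{\mathbf{c}}^i = \bar{\mathbf{c}}^i$, i.e.\ that $\mathbb{L}_iA_i = P$ on the relevant space. The kernel of $A_i$ is spanned by $\mathbf{1}$-type directions, namely $\mathbf{1}_m\otimes\mathbf{u}$ and $\mathbf{u}\otimes\mathbf{1}_m$, and all of these are annihilated by $P$. We must also verify that the selected solution coincides with the $\widebar{\mathcal{M}}_c^{+}$ solution of Proposition~\ref{prop_noisy_consensus_cov_ss}; uniqueness within the range of $I_2\otimes P$ settles this.
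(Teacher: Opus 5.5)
Your argument is correct, and it reaches the formula by a different route from the paper. The paper writes $\widebar{\mathcal{M}}_c=R_1-\beta R_2$ and expands $\widebar{\mathcal{M}}_c^{+}=\sum_{p\ge 0}\beta^p(R_1^+R_2)^pR_1^+$ around the non-switching block-diagonal part. It then evaluates $\mathbf{h}_S(R_1^+R_2)^pR_1^+T$ term by term and resums with a Neumann series; that resummation is where ``$\beta$ sufficiently small'' is used. You instead eliminate exactly in the stationary block system. For $n=2$ the coupling $-\Gamma^\top\otimes\mathbb{V}\otimes\mathbb{V}$ has rank one in the mode index and enters only through $\mathbf{s}=\pi_{\text{ss},2}\bar{\mathbf{c}}^1-\pi_{\text{ss},1}\bar{\mathbf{c}}^2$, so everything collapses to a single linear equation for $\mathbf{s}$.

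Your route buys three things:
\begin{itemize}
\item It gives an exact identity rather than a series.
\item It removes the smallness hypothesis altogether. Each $\mathbb{L}_i$ is positive semidefinite and commutes with $\mathbb{V}\otimes\mathbb{V}$, so $\widebar{\Omega}=(\mathbb{V}\otimes\mathbb{V})(\pi_{\text{ss},2}\mathbb{L}_1+\pi_{\text{ss},1}\mathbb{L}_2)(\mathbb{V}\otimes\mathbb{V})$ is symmetric positive semidefinite. Hence $\mathbb{I}_{m^2}+\beta\widebar{\Omega}$ is invertible for every $\beta\ge 0$, and you can strengthen your closing remark accordingly.
\item It avoids matching powers of $\beta$ inside a pseudoinverse, which is delicate here. $\widebar{\mathcal{M}}_c$ actually has the two-dimensional kernel spanned by $\mathbf{e}_1\otimes\mathbf{1}_{m^2}$ and $\mathbf{e}_2\otimes\mathbf{1}_{m^2}$, with $\mathbf{e}_j$ the standard basis of $\mathbb{R}^2$. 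So $\widebar{\mathcal{M}}_c\widebar{\mathcal{M}}_c^{+}$ is not the $\Pi$ used in the paper.
\end{itemize}
In exchange, the paper's expansion presents the switching effect as a perturbation series about the static baseline, and the series itself makes sense for any number of modes. Your reduction to a single unknown relies on the rank-one structure of a two-state generator.

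Two supporting claims need tightening, though neither breaks the argument.

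First, the kernel of $A_i$ is smaller than you say. For undirected graphs $\mathbb{V}L_i\mathbb{V}=L_i$, so $A_i=L_i\oplus L_i$ has eigenvalues $\lambda_j(L_i)+\lambda_k(L_i)$. Its kernel is therefore only $\mathrm{span}\{\mathbf{1}_{m^2}\}$, not all vectors of the form $\mathbf{1}_m\otimes\mathbf{u}$ and $\mathbf{u}\otimes\mathbf{1}_m$. Consequently $\mathbb{L}_iA_i=\mathbb{I}_{m^2}-\frac{1}{m^2}\mathbf{1}_{m^2}\mathbf{1}_{m^2}^\top$, which is not $\mathbb{V}\otimes\mathbb{V}$. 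What your step actually needs is $\ker A_i\subseteq\ker(\mathbb{V}\otimes\mathbb{V})$, so that $\mathbb{L}_iA_i$ acts as the identity on the range of $\mathbb{V}\otimes\mathbb{V}$, and that does hold.

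Second, uniqueness within the range of $\mathbb{I}_2\otimes\mathbb{V}\otimes\mathbb{V}$ is not enough on its own to identify your solution with $\widebar{\mathcal{M}}_c^{+}(\boldsymbol{\pi}_{\text{ss}}\otimes\mathrm{vec}(\mathbb{V}Q\mathbb{V}))$. You also need that this range is orthogonal to $\ker\widebar{\mathcal{M}}_c$, which follows from the kernel description above. That makes your solution the minimum-norm one. Alternatively, start directly from the physical steady state, which lies in that range because $\bar{\mathbf{x}}=\mathbb{V}\mathbf{x}$.
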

\begin{proof}
  We begin by writing  
   $\widebar{\mathcal{M}}_c = R_1 - \beta R_2$, where 
   \[
   R_1 = \text{diag}_n(\mathbb{V} L_i \mathbb{V} \oplus \mathbb{V} L_i \mathbb{V}), R_2 = \begin{bmatrix}
       -{\pi}_{\text{ss},2} & {\pi}_{\text{ss},1} \\{\pi}_{\text{ss},2} & -{\pi}_{\text{ss},1}
   \end{bmatrix}
   \otimes \mathbb{V} \otimes \mathbb{V}.
   \]
Recall the system error is  $\text{E} = \mathbf{h}_S \widebar{\mathcal{M}}_c^+  T$, where $T = \boldsymbol{\pi}_{\text{ss}} \otimes \text{vec}(\mathbb{V} Q \mathbb{V})$. 

    We choose to represent $\widebar{\mathcal{M}}_c^+ = \sum_{p = 0}^\infty \beta^p U_p$. Noting that  $\widebar{\mathcal{M}}_c \widebar{\mathcal{M}}_c^+ = \Pi$, where $\Pi = \mathbb{I}_{nm^2} - \frac{1}{nm^2} \mathbf{1}_{nm^2} \mathbf{1}_{nm^2}^\top$, we expand the series and solve for the matrices $U_r$ by matching terms with the same power of $\beta$ with the result that $U_p = (R_1^+ R_2)^p R_1^+$.

Note that 
\[
R_1^+T = \begin{bmatrix}
 {\pi}_{\text{ss},1} \bar{\mathbf{c}}_{\text{ss},1} \\{\pi}_{\text{ss},2} \bar{\mathbf{c}}_{\text{ss},2}
\end{bmatrix} \text{ and } \mathbf{h}_S R_1^+T = \widebar{\text{E}}_{\text{static}}. 
\]
Some detailed calculations yield 
\begin{multline*}
    \mathbf{h}_S (R_1^+ R_2)^p = (-1)^{p-1} \mathbf{h}_S \\\begin{bmatrix}
   -{\pi}_{\text{ss},2}  (\mathbb{L}_2 - \mathbb{L}_1) \Omega^{p-1} (\mathbb{V} \otimes \mathbb{V}) & \\
   &    {\pi}_{\text{ss},1}  (\mathbb{L}_2 - \mathbb{L}_1) \Omega^{p-1} (\mathbb{V} \otimes \mathbb{V})
\end{bmatrix},
\end{multline*}
and 
\[
\mathbf{h}_S \beta (R_1^+ R_2)^p R_1^+ T = (-1)^{p-1} \alpha    \overline{\mathbf{h}}_{S,\text{diff}} \Omega^{p-1} \bar{\mathbf{c}}_{\text{diff}}.
\]
Collecting the terms and using the Neumann series, which is guaranteed to converge for small $\beta$, we have 
    \begin{align*}
        \text{E} &= \widebar{\text{E}}_{\text{static}} - \alpha \overline{\mathbf{h}}_{S,\text{diff}} (\mathbb{I}_{m^2} + \beta \widebar{\Omega})^{-1} \bar{\mathbf{c}}_\text{diff},
    \end{align*}
\end{proof}

For the next proposition, we define $\mathbb{M}_i = (M_i \oplus M_i)^+$, ${\mathbf{c}}_{\text{ss},i} = \mathbb{M}_i \text{vec}(Q)$,  and:
\begin{align*}
    {\text{E}}_{\text{static}} &= \mathbf{h}_{S,1}({\pi}_{\text{ss},1} {\mathbf{c}}_{\text{ss},1} + {\pi}_{\text{ss},2} {\mathbf{c}}_{\text{ss},2}) \\
    {\Omega} &= {\pi}_{\text{ss},2} \mathbb{M}_1 + {\pi}_{\text{ss},1} \mathbb{M}_2 \\
    {\mathbf{c}}_{\text{diff}} &= {\mathbf{c}}_{\text{ss},2} - {\mathbf{c}}_{\text{ss},1} \\
    {\mathbf{h}}_{S,\text{diff}} &= \mathbf{h_{S,1}} (\mathbb{M}_2 - \mathbb{M}_1).
\end{align*}

\begin{proposition}\label{prop:ct-2-topology-leader}
For $n = 2$, $\beta$ sufficiently small, and under Assumptions~\ref{as:ergodic}~and~\ref{as:connected},  the system error of the noisy leader-follower reference tracking CT MJLS \eqref{eq:leader-follower-reference-2} is 
\begin{align}
\text{E} &= \text{E}_{\text{static}} - \alpha \mathbf{h_{S,\text{diff}}} (\mathbb{I}_{m^2} + \beta \Omega)^+ {\mathbf{c}}_{\text{diff}}.       
\end{align}
Substituting $\mathbf{h}_{N}^i$ for $\mathbf{h}_{S}$ yields an equivalent equation for the $i$-th node error. These results follow from Proposition \ref{prop_ref_tracking_cov_ss}.
\label{two-top-prop-leader}
\end{proposition}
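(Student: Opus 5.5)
We will mirror the proof of Proposition~\ref{prop:ct-2-topology}. By Proposition~\ref{prop_ref_tracking_cov_ss}, the system error is $\text{E} = \mathbf{h}_S \mathcal{M}_k^{-1} T$ with $T = \boldsymbol{\pi}_{\text{ss}} \otimes \text{vec}(Q)$. Here $\mathcal{M}_k$ is invertible by Lemma~\ref{lemma_ref_tracking_eig}, so no projection onto a disagreement subspace is needed. Using \eqref{generator_2x2}, we write $\mathcal{M}_k = R_1 - \beta R_2$ with
\[
R_1 = \text{diag}_2(M_i \oplus M_i), \qquad R_2 = \begin{bmatrix} -\pi_{\text{ss},2} & \pi_{\text{ss},1} \\ \pi_{\text{ss},2} & -\pi_{\text{ss},1}\end{bmatrix} \otimes \mathbb{I}_{m^2}.
\]
Each $M_i$ is symmetric positive definite: $L_i$ is positive semidefinite with null space $\mathbf{1}_m$, and $\mathbf{1}_m^\top K \mathbf{1}_m > 0$ when $|\mathcal{K}|>0$. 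Hence $M_i\oplus M_i$ is invertible, $\mathbb{M}_i = (M_i\oplus M_i)^{-1}$, and $R_1^{-1} = \text{diag}(\mathbb{M}_1,\mathbb{M}_2)$.

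For $\beta$ small enough that $\beta\|R_1^{-1}R_2\|<1$, we expand
\[
\mathcal{M}_k^{-1} = (\mathbb{I} - \beta R_1^{-1}R_2)^{-1}R_1^{-1} = \sum_{p\ge 0}\beta^p (R_1^{-1}R_2)^p R_1^{-1}.
\]
This step is simpler than the consensus case, since we have a genuine inverse rather than a pseudoinverse. The zeroth-order term gives $R_1^{-1}T = [\pi_{\text{ss},1}\mathbf{c}_{\text{ss},1}^\top,\ \pi_{\text{ss},2}\mathbf{c}_{\text{ss},2}^\top]^\top$, so $\mathbf{h}_S R_1^{-1}T = \text{E}_{\text{static}}$, using $\mathbf{h}_S = \mathbf{1}_2^\top\otimes \mathbf{h}_{S,1}$.

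For the higher-order terms we will use the rank structure of $R_2$. Write $R_2 = (\mathbf{e}\,\mathbf{w}^\top)\otimes \mathbb{I}_{m^2}$ with $\mathbf{e} = [-1,\ 1]^\top$ and $\mathbf{w} = [\pi_{\text{ss},2},\ -\pi_{\text{ss},1}]^\top$. Then
\[
\mathbf{w}^\top \otimes \mathbb{I}\;\cdot R_1^{-1} (\mathbf{e}\otimes \mathbb{I}) = -(\pi_{\text{ss},2}\mathbb{M}_1 + \pi_{\text{ss},1}\mathbb{M}_2) = -\Omega,
\]
and therefore
\[
(R_1^{-1}R_2)^p = (-1)^{p-1} R_1^{-1}(\mathbf{e}\otimes \mathbb{I})\,\Omega^{p-1}(\mathbf{w}^\top\otimes \mathbb{I}).
\]
The three outer factors each collapse to a simple expression:
\begin{itemize}
\item $\mathbf{h}_S R_1^{-1}(\mathbf{e}\otimes\mathbb{I}) = \mathbf{h}_{S,1}(\mathbb{M}_2-\mathbb{M}_1) = \mathbf{h}_{S,\text{diff}}$;
\item $(\mathbf{w}^\top\otimes \mathbb{I})R_1^{-1}T = \pi_{\text{ss},1}\pi_{\text{ss},2}(\mathbf{c}_{\text{ss},1}-\mathbf{c}_{\text{ss},2}) = -\pi_{\text{ss},1}\pi_{\text{ss},2}\,\mathbf{c}_{\text{diff}}$;
\item the factor $\beta$ combines with $\pi_{\text{ss},1}\pi_{\text{ss},2}$ to give $\alpha$.
\end{itemize}
The $p$-th term is then $(-1)^{p}\alpha\beta^{p-1}\mathbf{h}_{S,\text{diff}}\Omega^{p-1}\mathbf{c}_{\text{diff}}$.

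Summing over $p\ge1$ gives $-\alpha\,\mathbf{h}_{S,\text{diff}}\sum_{q\ge0}(-\beta\Omega)^q\mathbf{c}_{\text{diff}} = -\alpha\,\mathbf{h}_{S,\text{diff}}(\mathbb{I}_{m^2}+\beta\Omega)^{-1}\mathbf{c}_{\text{diff}}$. This is the claimed formula; the inverse coincides with the pseudoinverse written in the statement. In fact $\Omega$ is positive definite, being a positive combination of inverses of positive definite Kronecker sums, so $\mathbb{I}+\beta\Omega$ is invertible for every $\beta\ge0$. For the node error, we replace $\mathbf{h}_{S,1}$ by $\mathbf{h}_{N,1}^i$; nothing else changes.

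The main obstacle is the bookkeeping of signs and of which $\pi_{\text{ss},j}$ multiplies which block, so that the rank-one factorization of $R_2$ reproduces exactly $\Omega$ and the constant $\alpha$. We will check this carefully against the order of $\boldsymbol{\pi}_{\text{ss}}$ in \eqref{generator_2x2}. A secondary point is justifying convergence of the Neumann series for small $\beta$. This follows because $\|\beta R_1^{-1}R_2\|\to0$, and the final identity then extends by analyticity in $\beta$ wherever both sides are defined.
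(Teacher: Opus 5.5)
Your proposal is correct and follows the paper's own route. The paper proves this by repeating the proof of Proposition~\ref{prop:ct-2-topology} with $\mathcal{M}_k$, $\mathbb{I}_m$ and $\mathbb{M}_i$ in place of their consensus counterparts, which is exactly your splitting $\mathcal{M}_k = R_1 - \beta R_2$, Neumann expansion in $\beta$, and collapse of the higher-order terms to powers of $\Omega$. Your rank-one factorization $R_2 = (\mathbf{e}\mathbf{w}^\top)\otimes\mathbb{I}_{m^2}$ just makes the paper's ``detailed calculations'' explicit, and your signs correctly produce the minus sign in the stated formula.
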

\begin{proof}
Repeat the proof of Proposition \ref{two-top-prop} substituting $\mathcal{M}_k$ for $\widebar{\mathcal{M}}_c$, $\mathbb{I}_m$ for $\mathbb{V}$, $\mathbb{I}_{nm^2}$ for $\Pi$, $\mathbb{M}_i$ for $\mathbb{L}_i$, $\mathbf{c}_{\text{ss},i}$ for $\bar{\mathbf{{c}}}_{\text{ss},i}$, and so on. 
\end{proof}

We now focus our discussion on the noisy consensus problem and Proposition \ref{two-top-prop}. However, our conclusions straightforwardly apply to Proposition \ref{two-top-prop-leader} as well.

The error for the two-topology MSG consists of two main components. 
The first is $\widebar{\text{E}}_\text{static}$, which represents the baseline error incurred for the simplest MSG: the non-switching network. In other words, suppose the system initially enters into graph $G_i$ with probability $\pi_{\text{ss},i}$ and remains in $G_i$ thereafter. Then, the error, which is averaged over an infinite number of instances of the MSG, would be $\widebar{\text{E}}_\text{static}$.

Consequently, the second term reflects the effect of graph switching on the error. It is helpful to note that it resembles the expression for system error: $\mathbf{h}_S \widebar{\mathcal{M}}_c^+ \text{vec}(\mathbb{V} Q \mathbb{V})$. In that case, the elements of $\text{vec}(\mathbb{V} Q \mathbb{V})$ scale with the level of noise experienced by the agents, and $\mathbf{h}_S$ is used to extract the elements that contribute to the system error; for both, greater values mean greater error. This comparison suggests that the impact of graph switching on performance is directly related to the dissimilarity of the two topologies. Not only do greater values in $\overline{\mathbf{h}}_{S,\text{diff}}$ and ${\bar{\mathbf{c}}}_{\text{diff}}$ increase the magnitude of the second term, but they do so by selectively scaling the elements of $(\mathbb{I}_{m^2} + \beta \Omega)^+$ corresponding to the largest differences.

Extending the analogy further, $(\mathbb{I}_{m^2} + \beta \Omega)$, which mirrors $\widebar{\mathcal{M}}_c$ in the system error expression, encodes the topologies and generator matrix of some network. Although this network is difficult to interpret, it is clear that $(\mathbb{I}_{m^2} + \beta \Omega)$ and its pseudoinverse are positive semi-definite. If we make the reasonable assumption that each agent is independently affected by noise such that $F = Q = \mathbb{I}$, then $\overline{\mathbf{h}}_{S,\text{diff}}^\top = \bar{\mathbf{c}}_{\text{diff}}$. By the definition of positive semi-definite matrices, 
$\overline{\mathbf{h}}_{S,\text{diff}} (\mathbb{I}_{m^2} + \beta \widebar{\Omega})^+ \bar{\mathbf{c}}_{\text{diff}} \geq 0$. Due to the negative sign before the second term, this finding means that, given two graphs and the proportion of time the system must spend in each, allowing the network to switch between them cannot hurt its performance. In short, graph switching is favorable. We hypothesize that this holds true for any reasonable system noise matrix $F$ (e.g. positive diagonal elements), and we have not encountered evidence to the contrary. Furthermore, $\alpha > 0$ scales with the sum $q_{12} + q_{21}$. As $q_{ij}$ indicates the propensity of the system to switch out of $G_i$ into $G_j$, larger values of $\alpha$ correspond to more frequent graph switching. Since $\alpha$ scales the non-negative expression $\overline{\mathbf{h}}_{S,\text{diff}} (\mathbb{I}_{m^2} + \beta \widebar{\Omega})^+ \bar{\mathbf{c}}_{\text{diff}}$, more frequent switching gives lower error.

We conclude that, for the two-topology MSG, the act of switching between graphs lowers or does not change the error. In the former case, increasing the rate of switching further improves performance; we prove this for $F = \mathbb{I}$ and speculate that it is generally true. Moreover, we find that more dissimilar topologies produce better MSGs. This result applies analogously to Proposition \ref{two-top-prop-leader}. We believe that the intuitions developed in this section apply for arbitrarily large MSGs.


\section{Numerical Illustrations} \label{sec:simulations}


In this section, we illustrate the results of the previous sections by simulating simple MSGs. For all examples, $F = \mathbb{I}$. 


First, we examine the noisy consensus dynamics~\eqref{eq:noisy-consensus-switching} under MSGs with the state space $\mathbb{G}$ comprising the line and ring graphs shown in Figs. \ref{fig:line_ring_fig}(a)-(b) and the generator matrix
\begin{align}
    \Gamma = \epsilon \begin{bmatrix}
        -r_{12} & r_{12}
        \\
        r_{21} & -r_{21}
    \end{bmatrix},
    \label{generator_2_by_2}
\end{align}
where $q_{ij} = v_i = \epsilon {r}_{ij}$. Recall that $\Gamma$'s off-diagonal elements $q_{ij}$ scale with the MSG's propensity to switch from $G_i$ to $G_j$. Accordingly, $\epsilon$ controls the network's overall rate of graph switching. We fix $\epsilon = 0.5$ and study the system and node certainty indices as functions of $r_{12}$ and $r_{21}$ in Fig. \ref{fig:line_ring_fig}(c)-(d).  Fig. \ref{fig:line_ring_fig} illustrates that, in general, the MSG benefits from spending more time in the more connected topology. In this case, the performance improves when the MSG lingers in the ring topology (low $r_{21}$) and/or transitions to the ring more frequently (high $r_{12}$). Fig. \ref{fig:line_ring_fig}(c) also suggests that if $r_{12}$ is sufficiently high compared to $r_{21}$, the performance only improves marginally for rising $r_{12}$, which reveals that lengthening the time the system spends as a ring has diminishing returns. Fig. \ref{fig:line_ring_fig}(d) demonstrates that the nodes do not benefit equally from the addition of edge $(1,5)$. In addition, the certainty curve of node 3 shows that, for this MSG, only the shortest paths between nodes affects success as node 3 does not benefit from additional longer routes to previously accessible nodes (e.g. the extra path 3-4-5-1 in the ring does not improve node 3's certainty since the shorter path 3-2-1 already exists in the line).


\begin{figure}[ht!]
\centering
\begin{subfigure}{.25\textwidth}
  \centering
  \includegraphics[width=1.1in]{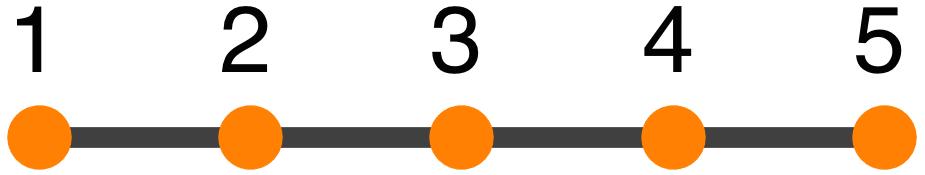}
  \caption{}
\end{subfigure}%
\begin{subfigure}{.25\textwidth}
  \centering
  \includegraphics[width=1.0in]{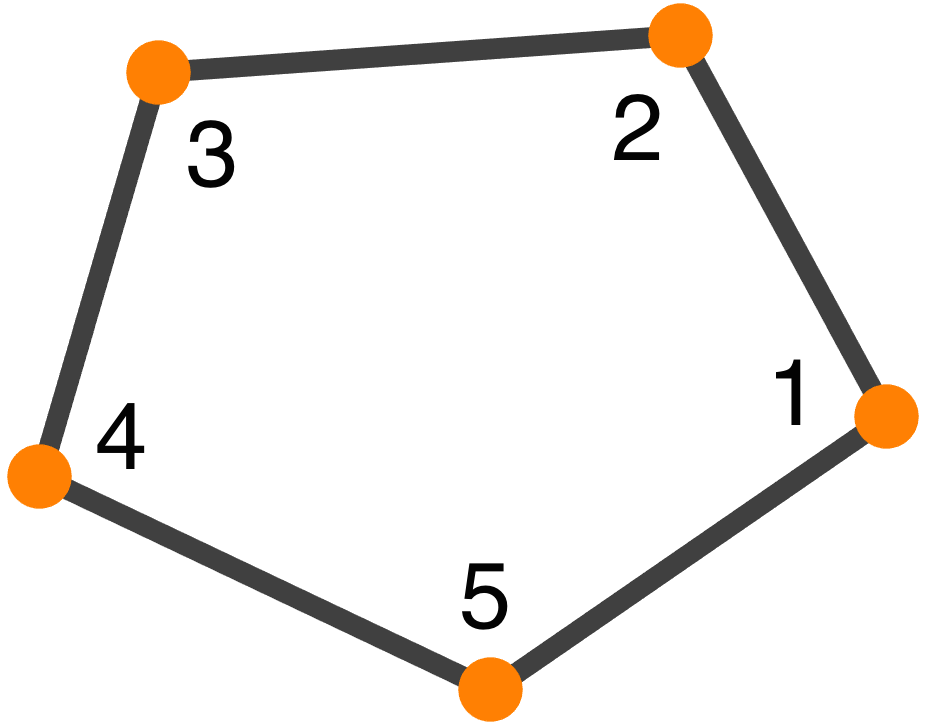}
  \caption{}
\end{subfigure}

\begin{subfigure}{.5\textwidth}
  \centering
  \hspace{-10pt}\includegraphics[height=1.45in]{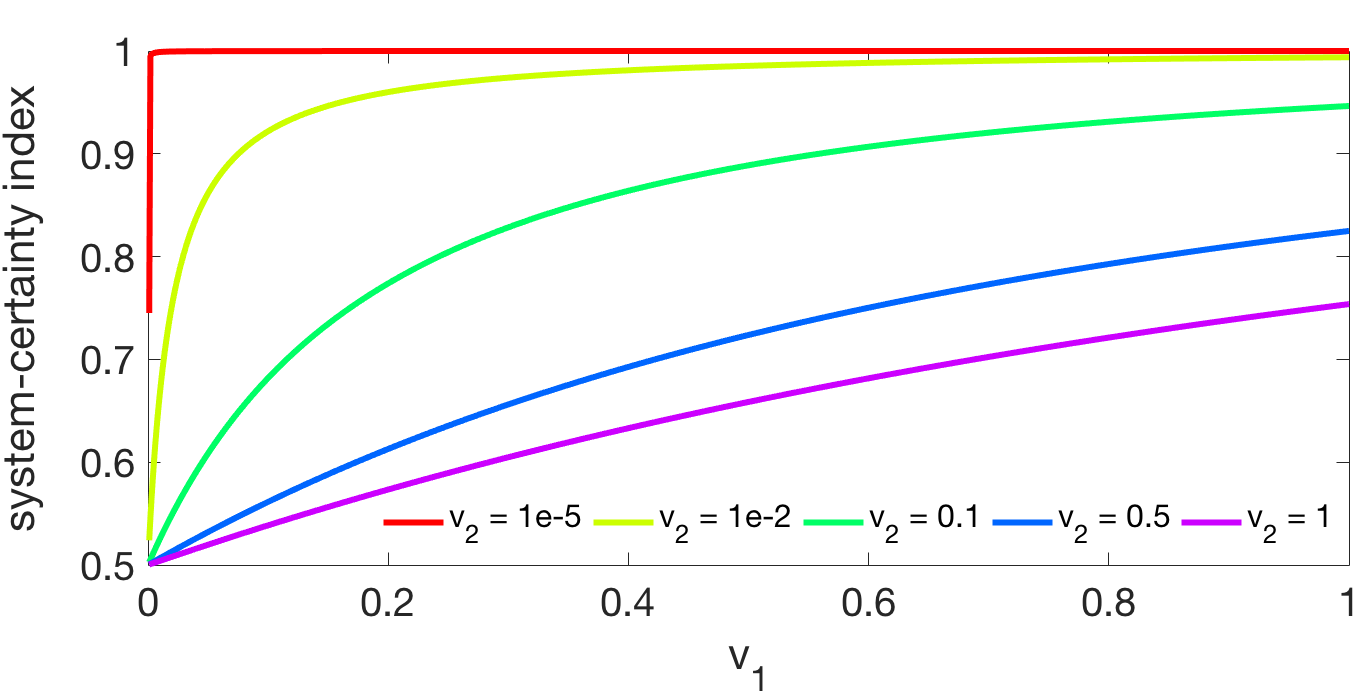}
  \caption{}
\end{subfigure}
\begin{subfigure}{.5\textwidth}
  \centering
  \hspace{-10pt}\includegraphics[height=1.45in]{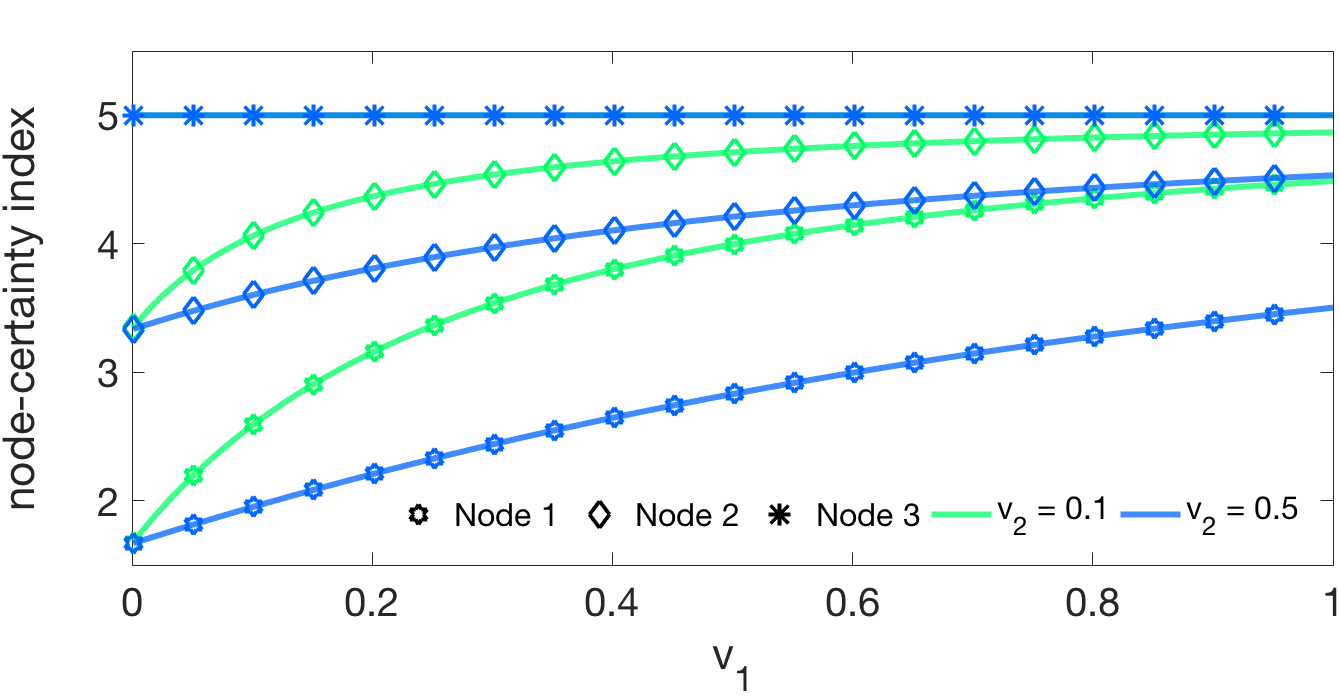}
  \caption{}
\end{subfigure}
\caption{Noisy consensus dynamics~\eqref{eq:noisy-consensus-switching} under MSG with state space comprising a line graph $G_1$ shown in panel (a) and a ring graph $G_2$ shown in panel (b), and defined by the parametrized generator matrix~\eqref{generator_2_by_2}. For fixed $\epsilon = 0.5$, panels (c) and (d) chart the system and node certainty indices, respectively, across various $r_{12}$ and $r_{21}$ values. 
}
\label{fig:line_ring_fig}
\end{figure}

Second, we investigate the noisy consensus dynamics~\eqref{eq:noisy-consensus-switching} under the MSGs with  $\mathbb G$ comprising graphs $G_1, G_2$, $G_3$, shown in Figs. \ref{fig:three_lines_fig}(a), (b), (c), respectively, and the generator matrix
\begin{align}
    \Gamma = \epsilon \begin{bmatrix}
        -r_{12} & r_{12} & 0
        \\
        {r_{2j}} & -2r_{2j} & {r_{2j}}
        \\
        0 & r_{32} & -r_{32}
    \end{bmatrix},
    \label{eq:sim_2_generator}
\end{align}
where we set $r_{12} = r_{32}$ and $r_{2j} = 0.05$. The system cannot transition directly between $G_1$ and $G_3$.
Unlike in the previous case, this MSG comprises three equally connected topologies that would individually produce identical system errors.
We illustrate system and node certainty indices as a function of $q_{12}$ and $\epsilon$ in Figs. \ref{fig:three_lines_fig}(d)-(e).
The system certainty curves in Fig. \ref{fig:three_lines_fig}(d) are concave and contain finite non-zero global maxima, which indicates that, unlike in the simpler line-graph case, the optimal MSG prefers to switch between graphs in order to balance the information flow across the nodes. Fig. \ref{fig:three_lines_fig}(e) shows three revealing facts. First, some nodes gain from increasing $q_{12}$ while others worsen, which causes the maxima in system certainty. Second, node 3 clearly does best with increasing $q_{12}$. Interestingly, node 2 also benefits from spending more time in $G_2$ despite its central placement in $G_3$, suggesting that obtaining a highly central position is less crucial than avoiding the least central ones. Finally, as $\epsilon$ increases, the certainty indices rise, meaning that, for the same stationary distribution, performance improves when switching occurs more often. 

\begin{figure}[ht!]
\centering
\vspace{1em}
\begin{subfigure}{.25\textwidth}
  \centering
  \includegraphics[width=1.1in]{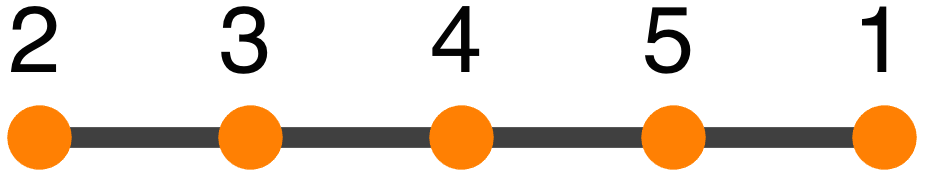}
  \caption{}
\end{subfigure}%
\begin{subfigure}{.25\textwidth}
  \centering
  \includegraphics[width=1.1in]{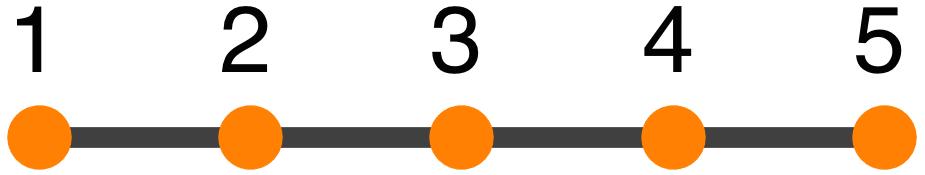}
  \caption{}
\end{subfigure}\\[0.125em]
\begin{subfigure}{.25\textwidth}
  \centering
  \includegraphics[width=1.1in]{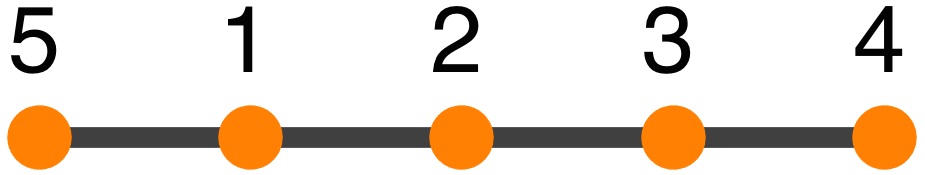}
  \caption{}
\end{subfigure}\\[0.2em]
\begin{subfigure}{.5\textwidth}
  \centering
  \hspace{-7pt}\includegraphics[height=1.45in]{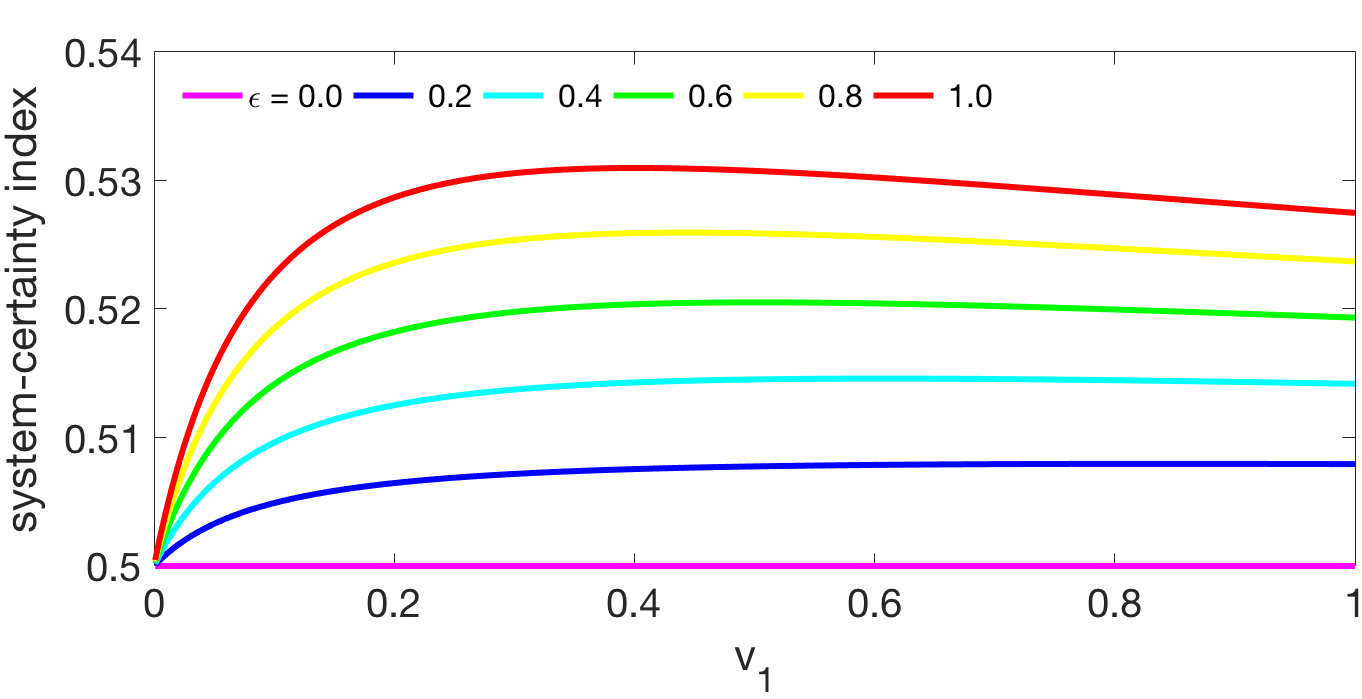}
  \caption{}
\end{subfigure}
\begin{subfigure}{.5\textwidth}
  \centering
  \hspace{-7pt}\includegraphics[height=1.45in]{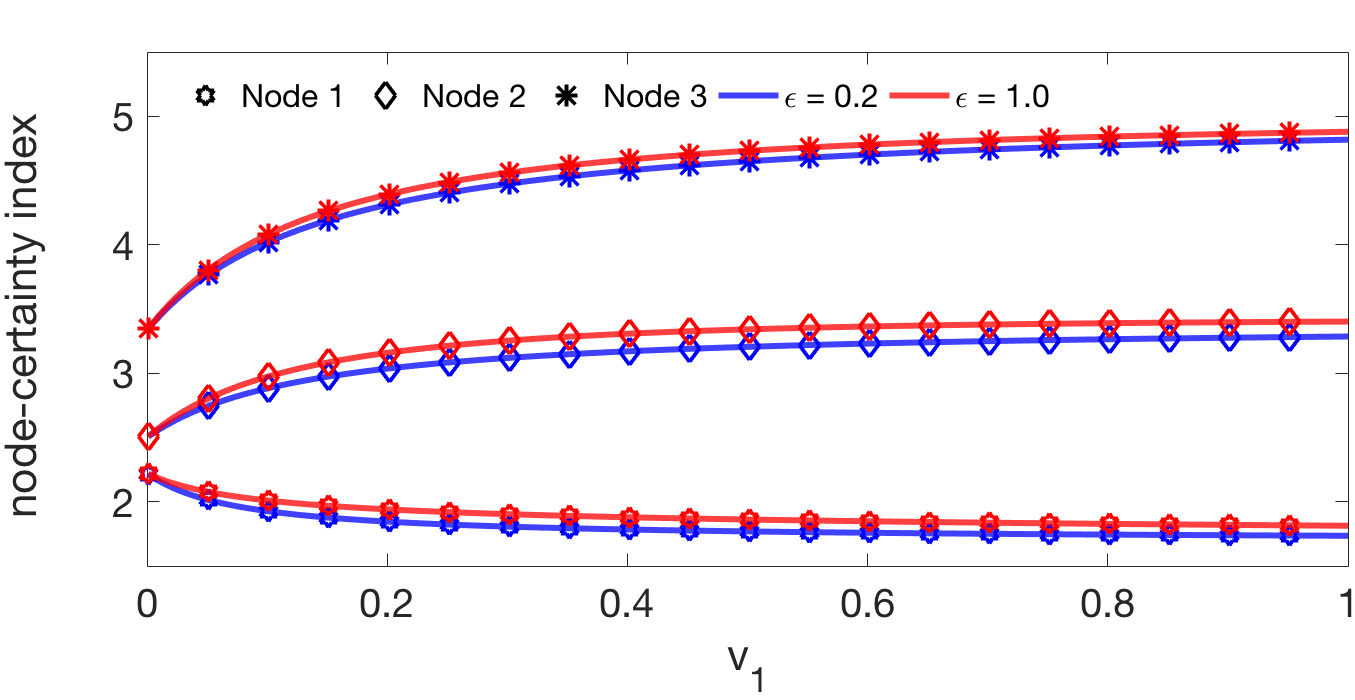}
  \caption{}
\end{subfigure}
\caption{Noisy consensus dynamics~\eqref{eq:noisy-consensus-switching} under MSG with state space comprising $G_1, G_2$ and $G_3$ shown in panels (a), (b), and (c), respectively, and defined by the parametrized generator matrix~\eqref{eq:sim_2_generator} with $v_2 = 0.1$. Panels (d) and (e) chart the system and node certainty indices, respectively, for various $v_1$ and $\epsilon$ values.
\vspace{4pt}}
\label{fig:three_lines_fig}
\end{figure}

\begin{figure}[ht!]
\centering 
\begin{subfigure}{.235\textwidth}
  \centering
  \includegraphics[width=1.0in]{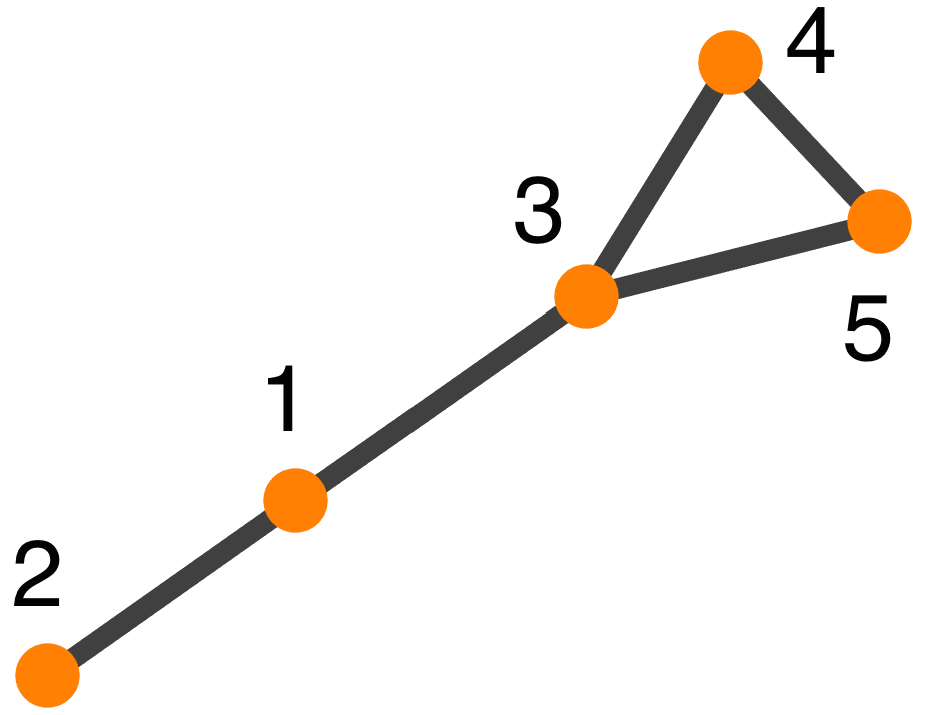}
  \caption{}
\end{subfigure}
\begin{subfigure}{.235\textwidth}
  \centering
  \includegraphics[width=1.0in]{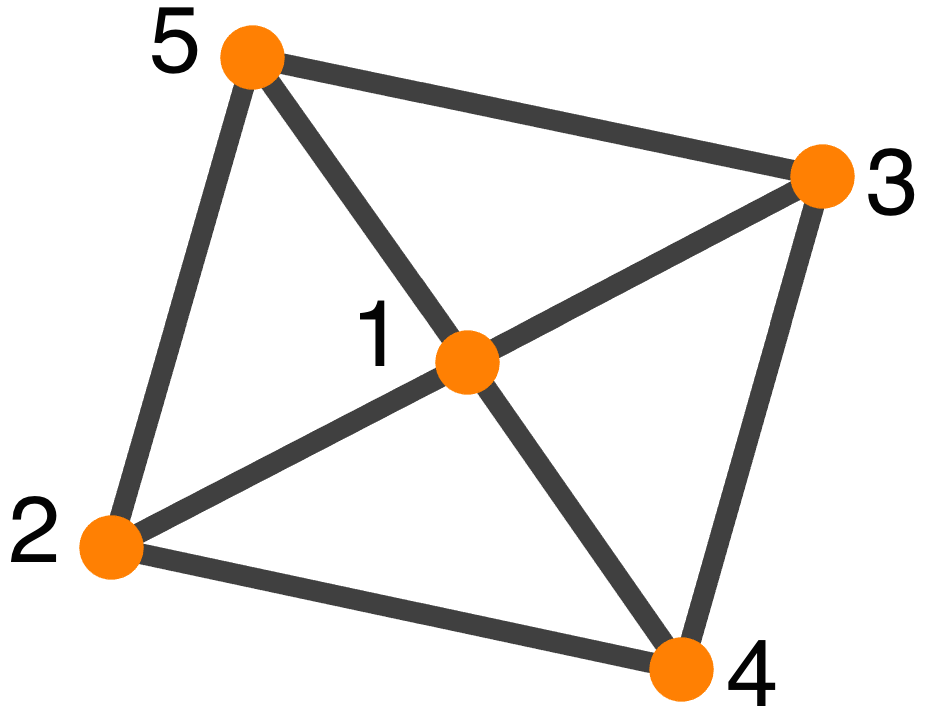}
  \caption{}
\end{subfigure}
\caption{Two topologies of a five-node MSG.} \label{fig:leader}
\end{figure}

Lastly, we study the noisy leader-follower reference tracking problem~\eqref{eq:leader-follower-reference-2} and the robustness centrality index. Consider the MSG comprising the two graphs in Fig. \ref{fig:leader} and defined by the generator matrix~\eqref{generator_2_by_2} with $q_{12} = q_{21} = 1$. 

Then, for $\kappa = \infty$, $\epsilon$ small, and $|\mathcal{K}| = 1$, the nodes listed in order of decreasing robustness centrality (i.e., leader potential) are 1-3-4-5-2. However, for $\epsilon = 1$, the nodes listed in order of decreasing node certainty are 3-1-4-5-2; interestingly, for $\epsilon = 10$, the ordering again becomes 1-3-4-5-2. Thus, the set of agents with the greatest combined node certainty does not reliably predict the optimal leader set.
This result contrasts with the static graph, for which it always holds true~\cite{KF-NEL:15}.

\section{Conclusion and Future Work} \label{sec:conclusion}
In this paper, we studied the noisy distributed consensus and noisy leader-follower reference tracking problems under Markov switching graphs (MSGs). We derived performance measures that quantify robustness of consensus, node certainty, and joint centrality in both continuous- and discrete-time settings. By specializing to the case of switching between two graph topologies, we obtained analytical insights into how switching rates and network structure influence system performance. Through numerical examples, we further illustrated how both the topology and switching dynamics affect consensus and tracking behavior. Notably, we observed that the optimal leader set for reference tracking does not necessarily correspond to the agents with the highest combined node certainty for the static graphs. This work opens several promising directions for future research. One is to better understand and characterize the structure of the matrix $\mathcal{M}$ that governs steady-state performance. Another is to develop efficient algorithms for optimal leader selection in noisy, Markov-switching networks. Additional extensions include relaxing our current assumptions to accommodate directed, weighted, or disconnected graphs.

\setcounter{section}{0}

\renewcommand \thesection {Appendix \Roman{section}}

\section{Discrete Time Consensus Networks}\label{ap:dt}

For the analogous discrete-time systems, each time step is denoted by $k \geq 0$. At some time step $k$, the network graph is given by $\mathcal{G}(k) = (\mathcal{V}, \mathcal{E}(k), Y(k))$ and parallels the continuous-time system described at the beginning of this subsection. The discrete-time dynamics for the noisy consensus problem are
\begin{align}
    \mathbf{x}(k + 1) = B(k) \mathbf{x}(k) + F \mathbf{v}(k) , \label{eq:noisy-consensus-switching-dt}
\end{align}
where $\mathbf{x}(k)$ is the system state at time step $k$, $B(k) = e^{-L(k)}$ is the averaging matrix, and $F$ is the covariance matrix for the system noise $\mathbf{v}(k) \in \mathcal{N}(\mathbf{0}_m, \mathbb{I}_m)$. 

Similarly, the discrete-time dynamics for the leader-follower reference tracking problem are
\begin{align}
    \mathbf{x}(k + 1) = A(k) \mathbf{x}(k) + F \mathbf{v}(k) , \label{eq:leader-follower-reference-dt}
\end{align}
where $A(k) = e^{-(L(k) + K)} = e^{-M(k)}$.

We first recall some results for a general discrete-time MJLS. 
Consider the following discrete-time (DT) MJLS:
\begin{align}
    \mathbf{x}(k + 1) = W(k) \mathbf{x}(k) + F \mathbf{v}(k),
    \label{eq:generic_mjls_dt}
\end{align}
where $W(k) \in \mathbb{R}^{m \times m}$ maps to the network graph of the MSG at time step $k$ with transition probability matrix $P$. Specifically, $W(k) = W_i$ whenever $\mathcal G(k) = G_i$. Next, we obtain the dynamics of the moments of $\mathbf{x}(k)$ evolving according to~\eqref{eq:generic_mjls_dt}.

The following notation aligns with that of Section \ref{sec:prelims_mjls_ct}. Let the mean $\boldsymbol{\mu}(k) = \mathbb{E}[\mathbf{x}(k)]$ and $\boldsymbol{\mu}^i(k) = \mathbb{E}[\mathbf{x}(k) \mathbbm{1}(\mathcal{G}(k) = G_i)]$ such that $\boldsymbol{\mu}(k) = \sum_{i = 1}^n \boldsymbol{\mu}^i(k)$. As for the CT MJLS, vertically stacking the means for all graphs gives the vector $\boldsymbol{{\nu}}(k)$. Furthermore, let the second moment $C(k) = \mathbb{E}[\mathbf{x}(k) \mathbf{x}(k)^\top]$ and $C^i(k) = \mathbb{E}[\mathbf{x}(k) \mathbf{x}(k)^\top \mathbbm{1}(\mathcal{G}(k) = G_i)]$ such that $C(k) = \sum_{i = 1}^n C^i(k)$. Vertically stacking the vectorized second moments for all graphs gives the vector $\mathbf{c}(k)$. Finally, let $\mathcal{H} = ({P}^T \otimes \mathbb{I}_{m}) \hspace{1pt} \text{diag}_n(W_i)$ and $\mathcal{A} = ({P}^\top \otimes \mathbb{I}_{m^2}) \hspace{1pt} \text{diag}_n(W_i \otimes W_i)$.

\begin{proposition} 
The following statements hold for the DT MJLS~\eqref{eq:generic_mjls_dt} with an MSG satisfying Assumptions~\ref{as:ergodic} and \ref{as:connected}:
\begin{enumerate}
    \item The dynamics of the mean term $\boldsymbol{{\nu}}(k)$ is
\begin{align}
    \boldsymbol{{{\nu}}}(k + 1) = \mathcal{H} \boldsymbol{{\nu}}(k);
    \label{eq:dyn_mean_dt}
\end{align} 
\item The dynamics of the second moment term $\mathbf{{c}}(k)$ is
\begin{align}
    \mathbf{{c}}(k + 1) &=  \mathcal{A} \mathbf{{c}}(k) + \boldsymbol{\pi}({k + 1}) \otimes \text{vec}({Q}).  
    \label{eq:dyn_second_moment_dt}
\end{align}
\end{enumerate}
\label{prop_dt_cov_dyn}
\end{proposition}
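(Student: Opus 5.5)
The plan is to derive both recursions by conditioning on the graph at the next step and using two independence facts: the noise $\mathbf{v}(k)$ is independent of $(\mathbf{x}(k),\mathcal{G}(k),\mathcal{G}(k+1))$, and, given $\mathcal{G}(k)$, the next graph $\mathcal{G}(k+1)$ is independent of $\mathbf{x}(k)$. This is the standard discrete-time MJLS moment computation (cf.\ the reference used for Proposition~\ref{prop_ct_cov_dyn}). I will use the convention that $\mathbf{x}(k+1)$ is driven by the graph at time $k$, and that $P_{ij}=\Pr(\mathcal{G}(k+1)=G_j\mid \mathcal{G}(k)=G_i)$.

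For the mean, I will write
\[
\boldsymbol{\mu}^j(k+1)=\mathbb{E}\bigl[(W(k)\mathbf{x}(k)+F\mathbf{v}(k))\mathbbm{1}(\mathcal{G}(k+1)=G_j)\bigr].
\]
The noise term vanishes because $\mathbf{v}(k)$ has zero mean and is independent of the indicator. For the other term, I split over $\mathcal{G}(k)=G_i$ and apply the Markov property:
\[
\mathbb{E}[W_i\mathbf{x}(k)\mathbbm{1}(\mathcal{G}(k)=G_i)\mathbbm{1}(\mathcal{G}(k+1)=G_j)]=P_{ij}W_i\boldsymbol{\mu}^i(k).
\]
Hence $\boldsymbol{\mu}^j(k+1)=\sum_i P_{ij}W_i\boldsymbol{\mu}^i(k)$. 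Stacked over $j$, this is exactly $(P^\top\otimes\mathbb{I}_m)\,\text{diag}_n(W_i)\,\boldsymbol{\nu}(k)=\mathcal{H}\boldsymbol{\nu}(k)$.

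For the second moment, I will expand $\mathbf{x}(k+1)\mathbf{x}(k+1)^\top\mathbbm{1}(\mathcal{G}(k+1)=G_j)$ into three pieces.
\begin{enumerate}
\item The quadratic state term gives $\sum_i P_{ij}W_iC^i(k)W_i^\top$, by the same Markov-property argument as for the mean.
\item The cross terms vanish, since $\mathbf{v}(k)$ is zero mean and independent of $\mathbf{x}(k)$ and of both indicators.
\item The noise term gives $FF^\top\Pr(\mathcal{G}(k+1)=G_j)=\pi_j(k+1)Q$.
\end{enumerate}
Vectorizing with $\text{vec}(W_iC^iW_i^\top)=(W_i\otimes W_i)\text{vec}(C^i)$ and stacking over $j$ yields $\mathbf{c}(k+1)=\mathcal{A}\mathbf{c}(k)+\boldsymbol{\pi}(k+1)\otimes\text{vec}(Q)$.

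The only delicate point is the convention and independence structure. The conditional independence of $\mathcal{G}(k+1)$ and $\mathbf{x}(k)$ given $\mathcal{G}(k)$ has to be checked against the model: $\mathbf{x}(k)$ is measurable with respect to past graphs and past noises, and the chain is Markov and independent of the noise. I will also make sure the timing convention matches the stated form $\boldsymbol{\pi}(k+1)$, meaning $W(k)$ uses $\mathcal{G}(k)$ while the indicator refers to $\mathcal{G}(k+1)$. Assumptions~\ref{as:ergodic} and~\ref{as:connected} are not actually needed for these identities; they only matter for the subsequent steady-state analysis.
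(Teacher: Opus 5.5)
Your proposal is correct and follows essentially the same route as the paper's proof. The paper also conditions on the graph at the previous step (via its total-expectation lemma), uses the Markov property to extract the transition probability, and lets the zero-mean independent noise kill the cross terms while contributing $\pi_j(k+1)Q$; it then vectorizes with $\text{vec}(W_iC^iW_i^\top)=(W_i\otimes W_i)\text{vec}(C^i)$ and stacks. Your version is slightly cleaner in two respects: it states the independence assumptions explicitly, and it indexes the sum correctly, whereas the paper's vectorized line writes $\mathbf{c}^i_{k-1}$ where $\mathbf{c}^j_{k-1}$ is meant.
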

\begin{proof}
 This result is standard in MJLS literature. See, for example, \cite[Chapter 4]{gupta2009networked}. For completeness, we have included a short proof in~\ref{ap:prop_dt_cov_dyn}.
\end{proof}

We now establish the discrete-time equivalents of Lemma~\ref{lemma_noisy_consensus_eig} and Proposition~\ref{prop_noisy_consensus_cov_ss}. Let $\mathcal H_c$ and $\mathcal A_c$ be the system matrices in~\eqref{eq:dyn_mean_dt} and~\eqref{eq:dyn_second_moment_dt} after specializing Proposition~\ref{prop_dt_cov_dyn} to the DT MJLS~\eqref{eq:noisy-consensus-switching-dt}. In the following, the time $t$ in the notation in Section~\ref{sec:robust_consensus} is substituted with the time step $k$ for the discrete-time notation of $\bar{\boldsymbol{\mu}}(k)$, and so on. 
\begin{lemma}
For the discrete-time noisy consensus dynamics \eqref{eq:noisy-consensus-switching-dt} under Assumptions \ref{as:ergodic} and \ref{as:connected}, both $\mathcal{H}_c$ and $\mathcal{A}_c$ have exactly one eigenvalue at $1$ each; all other eigenvalues lie strictly within the unit circle of the complex plane.
    \label{lemma_noisy_consensus_eig_dt}
\end{lemma}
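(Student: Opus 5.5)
Specializing Proposition~\ref{prop_dt_cov_dyn} to \eqref{eq:noisy-consensus-switching-dt} gives $W_i = B_i = e^{-L_i}$. Hence
\[
\mathcal{H}_c = (P^\top\otimes\mathbb{I}_m)\,\text{diag}_n(B_i), \qquad \mathcal{A}_c = (P^\top\otimes\mathbb{I}_{m^2})\,\text{diag}_n(B_i\otimes B_i).
\]
I will work with the transposes $\mathcal{H}_c^\top = \text{diag}_n(B_i)(P\otimes\mathbb{I}_m)$ and $\mathcal{A}_c^\top = \text{diag}_n(B_i\otimes B_i)(P\otimes \mathbb{I}_{m^2})$, which have the same spectra. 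The plan is to show that each is a row-stochastic matrix whose associated directed graph is strongly connected and aperiodic, i.e.\ that it is primitive. Perron--Frobenius theory then gives a simple eigenvalue at $1$ with all other eigenvalues strictly inside the unit circle. This is the discrete analogue of the Laplacian argument in Lemma~\ref{lemma_noisy_consensus_eig}.

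First, the entrywise properties. Since each $G_i$ is connected and undirected (Assumption~\ref{as:connected}), $B_i = e^{-L_i}$ is symmetric and doubly stochastic. It is also entrywise strictly positive, because $e^{-L_i} = e^{-d_{\max}}e^{d_{\max}\mathbb{I}-L_i}$ with $d_{\max}\mathbb{I}-L_i$ nonnegative and irreducible. Consequently $B_i\otimes B_i$ is also stochastic and strictly positive. $P$ is row-stochastic, so $P\otimes\mathbb{I}$ is too, and a product of row-stochastic matrices is row-stochastic. Thus $\mathcal{H}_c^\top\mathbf{1}_{nm}=\mathbf{1}_{nm}$ and $\mathcal{A}_c^\top\mathbf{1}_{nm^2}=\mathbf{1}_{nm^2}$. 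In particular the spectral radius is $1$ and $1$ is an eigenvalue.

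Second, primitivity, which I regard as the main step. Index the entries of $\mathcal{H}_c^\top$ by pairs $(i,a)$ with $i\in S$ and $a\in\mathcal V$. Its $((i,a),(j,b))$ entry equals $(B_i)_{ab}P_{ij}$, and this is positive exactly when $P_{ij}>0$, because $B_i$ is strictly positive. So the graph of $\mathcal{H}_c^\top$ is the Markov-chain graph of $P$, with each node replaced by a block of $m$ nodes and each chain edge $i\to j$ replaced by the complete bipartite connection between block $i$ and block $j$. Ergodicity of the chain (Assumption~\ref{as:ergodic}) means $P$ is primitive, so $P^k>0$ for some $k$. The $k$-step entries of $\mathcal{H}_c^\top$ are sums of products $(B_{i_0})_{a_0a_1}P_{i_0i_1}\cdots$, and each factor $(B_{i_\ell})_{a_\ell a_{\ell+1}}$ is positive. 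Hence $(\mathcal{H}_c^{\top})^k>0$ whenever $P^k>0$, so $\mathcal{H}_c^\top$ is primitive. The identical argument with $B_i\otimes B_i$ in place of $B_i$ shows that $\mathcal{A}_c^\top$ is primitive.

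Third, conclusion. By the Perron--Frobenius theorem for primitive nonnegative matrices, the eigenvalue $1$ of $\mathcal{H}_c^\top$ and of $\mathcal{A}_c^\top$ is algebraically simple, and every other eigenvalue has modulus strictly less than $1$. The same therefore holds for $\mathcal{H}_c$ and $\mathcal{A}_c$. One checks directly that $(P^\top\otimes\mathbb{I}_m)\,\text{diag}_n(B_i)\,(\boldsymbol{\pi}_{ss}\otimes\mathbf{1}_m) = P^\top\boldsymbol{\pi}_{ss}\otimes\mathbf{1}_m = \boldsymbol{\pi}_{ss}\otimes\mathbf{1}_m$. So the right eigenvector of $\mathcal{H}_c$ for eigenvalue $1$ is $\boldsymbol{\pi}_{ss}\otimes\mathbf{1}_m$, and likewise $\boldsymbol{\pi}_{ss}\otimes\mathbf{1}_{m^2}$ for $\mathcal{A}_c$.

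The only delicate point is the strict positivity of $e^{-L_i}$, which is what makes the block structure inherit primitivity from $P$. If the intended reading of ergodic were only irreducible, I would instead derive aperiodicity of the lifted graph from the positivity of the diagonal blocks. However, those blocks are positive only where $P_{ii}>0$, so in that case I would rely on the standard convention that an ergodic DTMC is irreducible and aperiodic.
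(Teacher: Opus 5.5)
Your proof is correct, but it takes a different route from the paper. The paper works with eigenspaces of the two factors. It notes $\Lambda_1(P\otimes\mathbb{I}_m)=\{\mathbf{1}_n\otimes\mathbf{a}\}$ and $\Lambda_1(\text{diag}_n(B_i))=\{\mathbf{b}\otimes\mathbf{1}_m\}$. It then asserts that the eigenspace of $\mathcal{H}_c^\top$ at $1$ is their intersection $\{\mathbf{1}_{mn}\}$, mirroring the null-space intersection for sums of Laplacians in Lemma~\ref{lemma_noisy_consensus_eig}. Finally it cites Perron--Frobenius to place the remaining eigenvalues inside the unit circle. You instead use strict positivity of $e^{-L_i}$ to show that the graph of $\mathcal{H}_c^\top$ (and of $\mathcal{A}_c^\top$) is a blow-up of the chain graph. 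Hence $P^k>0$ forces $(\mathcal{H}_c^\top)^k>0$, and Perron--Frobenius for primitive matrices gives everything at once. The paper's version is shorter and parallels the continuous-time proof, but its key step is under-justified. The spectra of the factors do not in general determine the fixed vectors of a product, so the intersection claim needs an extra argument, for instance a maximum principle using $B_i>0$ and irreducibility of $P$. The final appeal to Perron--Frobenius also tacitly requires primitivity. Your argument supplies exactly this, yields algebraic simplicity directly, and makes the role of aperiodicity explicit. Your closing caveat is well placed: for the periodic chain $P=\begin{bmatrix}0&1\\1&0\end{bmatrix}$ one gets $\mathcal{H}_c^\top=\begin{bmatrix}0&B_1\\B_2&0\end{bmatrix}$. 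This matrix has eigenvalue $-1$ with eigenvector $[\mathbf{1}_m^\top,\,-\mathbf{1}_m^\top]^\top$, so the lemma genuinely depends on reading ``ergodic'' as irreducible and aperiodic.
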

\begin{proof}
When specializing Proposition~\ref{prop_dt_cov_dyn} to the DT MJLS~\eqref{eq:noisy-consensus-switching-dt}, $W_i = B_i$, and \eqref{eq:generic_mjls_dt} reduces to~\eqref{eq:noisy-consensus-switching-dt}. Consequently, $\mathcal{H}_c = (P^\top \otimes \mathbb{I}_m) \text{diag}_n(B_i)$ and $\mathcal{A}_c = (P^\top \otimes \mathbb{I}_{m^2}) \text{diag}_n(B_i \otimes B_i)$. 

Let $\Lambda_1(X)$ be the eigenspace of $1$ for matrix $X$, and $|\lambda_{\setminus 1}(X)| < 1$ means that all eigenvalues of $X$ other than those at $1$ lie strictly within the unit circle of the complex plane. For an ergodic DTMC, $\Lambda_1(P) = \{\mathbf{1}_n\}$, and $|\lambda_{\setminus 1}(P)| < 1$. Therefore,  $\Lambda_1(P \otimes \mathbb{I}_m) = \{(\mathbf{1}_n \otimes \mathbf{a}) \forall \mathbf{a} \in \mathbb{R}^m \}$. In addition, since all graphs in $\mathbb{G}$ are connected, $\Lambda_1(B_i) = \{\mathbf{1}_m \}$, and $|\lambda_{\setminus 1}(B_i)| < 1$ for all $i \in S$. As a result,  $\Lambda_1(\text{diag}_n(B_i)) = \{(\mathbf{b} \otimes \mathbf{1}_m) \forall \mathbf{b} \in \mathbb{R}^n \}$. As $|\lambda_{\setminus 1}(P \otimes \mathbb{I}_m)| < 1$ and $|\lambda_{\setminus 1}(\text{diag}_n(B_i))| < 1$, it must be true that $\Lambda_1({\mathcal{H}}_c^{\hspace{1pt}\top}) = \Lambda_1(P \otimes \mathbb{I}_m) \cap \Lambda_1(\text{diag}_n(B_i)) = \{ \mathbf{1}_{mn} \}$. Its transpose ${\mathcal{H}}_c$ must also have exactly one eigenvalue at $1$. It is straightforward to verify that the associated right eigenvector is $\boldsymbol{\pi}_{\text{ss}} \otimes \mathbf{1}_m$. By the Perron-Frobenius theorem, $|\lambda_{\setminus 1}({\mathcal{H}}_c)| < 1$.

It follows similarly that $\mathcal{A}_c$ has a unique eigenvector $\boldsymbol{\pi}_{\text{ss}} \otimes \mathbf{1}_{m^2}$ corresponding to the eigenvalue at $1$ and $|\lambda_{\setminus 1}(\mathcal{A}_c)| < 1$.
\end{proof}



From \eqref{eq:noisy-consensus-switching-dt}, the discrete-time disagreement dynamics is
\begin{align}\label{eq:disagreement_dt}
    \bar{\mathbf{x}}(k + 1) = \mathbb{V} B(k) \bar{\mathbf{x}}(k) +  \mathbb{V} F \mathbf{v}(k).
\end{align}

\begin{proposition} For the disagreement dynamics~\eqref{eq:disagreement_dt}  under Assumptions~\ref{as:ergodic}~and~\ref{as:connected}, the following statements hold:
\begin{enumerate}
    \item the steady-state mean disagreement vector is zero, i.e., \begin{align}\label{eq:mean_consensus_dt}
    \bar{\boldsymbol{\nu}}_{\text{ss}} = \mathbf{0}_{nm}; 
    \end{align}
    \item the steady-state second moment of the disagreement vector is 
    \begin{align}
     \bar{\mathbf{c}}_{\text{ss}} =  (\mathbb{I}_{nm^2} - \widebar{\mathcal{A}}_c)^{+} (\boldsymbol{\pi}_{\text{ss}} \otimes \text{vec}(\mathbb{V} Q \mathbb{V})), \label{eq:second_moment_consensus_dt}
\end{align}
where $\widebar{\mathcal{A}}_c = (P^\top \otimes \mathbb{V} \otimes \mathbb{V}) \text{diag}_n (\mathbb{V} B_i\mathbb{V} \otimes \mathbb{V} B_i\mathbb{V})$.
\end{enumerate}
\label{prop_noisy_consensus_cov_ss_dt}
\end{proposition}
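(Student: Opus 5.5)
We mirror the proof of Proposition~\ref{prop_noisy_consensus_cov_ss}, replacing the continuous-time facts with their discrete-time counterparts. The starting point is that $\bar{\mathbf{x}} = \mathbb{V}\mathbf{x}$ with $\mathbb{V}^2 = \mathbb{V}$. Because $\mathbb{V} B_i = \mathbb{V} B_i \mathbb{V}$ (since $B_i \mathbf{1}_m = \mathbf{1}_m$), the disagreement dynamics~\eqref{eq:disagreement_dt} is itself a DT MJLS of the form~\eqref{eq:generic_mjls_dt}, with $W_i = \mathbb{V}B_i\mathbb{V}$ and noise matrix $\mathbb{V}F$.

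Applying Proposition~\ref{prop_dt_cov_dyn} then gives
\[
\bar{\boldsymbol{\nu}}(k+1) = \widebar{\mathcal{H}}_c \bar{\boldsymbol{\nu}}(k), \qquad \bar{\mathbf{c}}(k+1) = \widebar{\mathcal{A}}_c \bar{\mathbf{c}}(k) + \boldsymbol{\pi}(k+1) \otimes \text{vec}(\mathbb{V}Q\mathbb{V}).
\]
Since every block of $\bar{\boldsymbol{\nu}}$ and $\bar{\mathbf{c}}$ is invariant under $\mathbb{V}$ and $\mathbb{V}\otimes\mathbb{V}$ respectively, we may insert these projections. Doing so yields $\widebar{\mathcal{H}}_c = (P^\top \otimes \mathbb{V})\,\text{diag}_n(\mathbb{V}B_i\mathbb{V})$ and the stated $\widebar{\mathcal{A}}_c$.

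The key step is a spectral claim: on the invariant subspace $\mathbb{R}^n \otimes \mathbf{1}_m^\perp$ (respectively $\mathbb{R}^n\otimes(\mathbf{1}_m^\perp\otimes\mathbf{1}_m^\perp)$), $\widebar{\mathcal{H}}_c$ and $\widebar{\mathcal{A}}_c$ have spectral radius strictly less than $1$.

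1. For $\widebar{\mathcal{H}}_c$, first observe that on this subspace $P^\top\otimes\mathbb{V}$ acts as $P^\top\otimes\mathbb{I}$. Next, $B_i$ is symmetric, so it restricted to $\mathbf{1}_m^\perp$ has norm $\rho_i = e^{-\lambda_2(L_i)} < 1$ by connectivity (Assumption~\ref{as:connected}).
2. A natural first attempt is to bound the operator norm in the $(\boldsymbol{\pi}_{\text{ss}}^{-1})$-weighted $\ell_1$ sense over blocks, $\|\mathbf{y}\| = \sum_i \|\mathbf{y}^i\|_2$. Then $(P^\top\otimes\mathbb{I})$ is nonexpansive, since its columns sum to $1$, and the block-diagonal factor contracts by $\max_i\rho_i<1$. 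This gives spectral radius at most $\max_i \rho_i < 1$.
3. The same argument applies to $\widebar{\mathcal{A}}_c$, using $\|(\mathbb{V}B_i\mathbb{V})\otimes(\mathbb{V}B_i\mathbb{V})\| \le \rho_i^2$.

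This replaces the Perron–Frobenius/Laplacian reasoning of Lemma~\ref{lemma_noisy_consensus_eig_dt}. It is the step I expect to need the most care, specifically checking that the block-$\ell_1$ norm argument really yields nonexpansiveness of $P^\top\otimes\mathbb{I}$ (it does, because $P$ is row stochastic, so $P^\top$ has unit column sums).

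On the complementary directions, e.g.\ those containing $\mathbf{1}_m$ factors, both matrices act as zero or at most with eigenvalue $1$ along $\boldsymbol{\pi}_{\text{ss}}\otimes\mathbf{1}$-type vectors. As in the continuous-time case, the iterates never have components there, so these eigenvalues are inconsequential.

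With the spectral claim in hand, both parts follow:

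1. \textbf{Mean.} The mean iterates converge geometrically to the unique fixed point in the invariant subspace, namely $\mathbf{0}_{nm}$, which gives~\eqref{eq:mean_consensus_dt}.
2. \textbf{Second moment.} Ergodicity (Assumption~\ref{as:ergodic}) gives $\boldsymbol{\pi}(k)\to\boldsymbol{\pi}_{\text{ss}}$ geometrically. Combining this with the contraction, $\bar{\mathbf{c}}(k)$ converges to the solution of $(\mathbb{I}_{nm^2}-\widebar{\mathcal{A}}_c)\bar{\mathbf{c}}_{\text{ss}} = \boldsymbol{\pi}_{\text{ss}}\otimes\text{vec}(\mathbb{V}Q\mathbb{V})$ lying in the invariant subspace.
3. \textbf{Identification with the pseudoinverse.} The forcing term lies in the range of $\mathbb{I}-\widebar{\mathcal{A}}_c$, and the solution is orthogonal to the kernel directions (which are built from $\mathbf{1}_m$). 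Hence this solution coincides with the pseudoinverse expression~\eqref{eq:second_moment_consensus_dt}.
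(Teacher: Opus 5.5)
Your proposal is correct. It has the same skeleton as the paper's proof: recast \eqref{eq:disagreement_dt} as a DT MJLS with $W_i=\mathbb{V}B_i\mathbb{V}$ and noise $\mathbb{V}F$, specialize Proposition~\ref{prop_dt_cov_dyn}, show Schur stability, and solve the fixed-point equation. Where you differ is in how you justify the crucial stability step.

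The paper notes that $\text{diag}_n(\mathbb{V}B_i)$ has all eigenvalues inside the unit circle and then appeals to the eigenspace/Perron--Frobenius reasoning of Lemma~\ref{lemma_noisy_consensus_eig_dt}. As written, that step is heuristic. The spectrum of a product is not controlled by the spectra of its factors. Also, $\mathbb{V}B_i$ is nonzero with zero row sums, so it has negative entries and Perron--Frobenius does not apply.

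Your block norm $\|\mathbf{y}\|=\sum_i\|\mathbf{y}^i\|_2$ supplies exactly what is missing:
\begin{itemize}
\item symmetry of $\mathbb{V}B_i\mathbb{V}$ turns its spectral radius $e^{-\lambda_2(L_i)}$ into an operator-norm bound;
\item row-stochasticity of $P$ makes $P^\top\otimes\mathbb{I}$ nonexpansive.
\end{itemize}
Together these give an explicit contraction factor $\max_i e^{-2\lambda_2(L_i)}$ for $\widebar{\mathcal{A}}_c$, independent of $P$. The paper's route is shorter and mirrors its treatment of the unprojected system. Yours is self-contained and rigorous, and it yields a convergence rate.

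Three small corrections.
\begin{itemize}
\item Drop the $\boldsymbol{\pi}_{\text{ss}}^{-1}$ weighting. With block weights $w_i$, the induced norm of $P^\top\otimes\mathbb{I}$ is $\max_i (Pw)_i/w_i$. This equals $1$ for $w=\mathbf{1}_n$, but exceeds $1$ for $w_i=1/\pi_{\text{ss},i}$ unless $\boldsymbol{\pi}_{\text{ss}}$ is uniform. The unweighted norm you actually wrote is the right one.
\item You do not need to restrict to $\mathbb{R}^n\otimes\mathbf{1}_m^\perp$. Because $\mathbb{V}B_i\mathbb{V}\mathbf{1}_m=\mathbf{0}_m$, we have $\|\mathbb{V}B_i\mathbb{V}\|_2=e^{-\lambda_2(L_i)}$ on all of $\mathbb{R}^m$, so your bound holds on the whole space.
\item Consequently, your remark about a possible eigenvalue $1$ along $\boldsymbol{\pi}_{\text{ss}}\otimes\mathbf{1}$-type vectors is false, since those vectors are annihilated. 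So $\mathbb{I}_{nm^2}-\widebar{\mathcal{A}}_c$ is invertible, and the pseudoinverse in \eqref{eq:second_moment_consensus_dt} is an ordinary inverse. Your range/kernel identification step is therefore vacuous but harmless, and this agrees with the paper's claim that every eigenvalue of $\widebar{\mathcal{A}}_c$ lies strictly inside the unit circle.
\end{itemize}
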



\begin{proof} 
For the following proof parallels, refer to the proofs of Proposition~\ref{prop_noisy_consensus_cov_ss} and  Lemma~\ref{lemma_noisy_consensus_eig_dt} for details when needed. 

Specializing Proposition~\ref{prop_dt_cov_dyn}(i) to~\eqref{eq:disagreement} gives 
\begin{align*}
    {\bar{\boldsymbol{\nu}}}(k + 1) = - \widebar{\mathcal{H}}_c {\bar{\boldsymbol{\nu}}}(k),
\end{align*}
where $\widebar{\mathcal{H}}_c = (P^\top \otimes \mathbb{I}_{m}) \text{diag}_n(\mathbb{V} B_i)$.  Since $V$ is the projector onto the disagreement subspace, $\mathbb{V} B_i$ removes the eigenvalue of $B_i$ at $1$ and replaces it with an eigenvalue at $0$. As a result,  all eigenvalues of $\text{diag}_n(\mathbb{V} B_i)$ lie strictly within the unit circle of the complex plane. From the proof of Lemma~\ref{lemma_noisy_consensus_eig_dt}, we conclude that all eigenvalues of $\widebar{\mathcal{H}}_c$ also lie strictly within the unit circle as well, meaning that the solution at steady state is ${\bar{\boldsymbol{\nu}}}_{\text{ss}} = \mathbf{0}_{nm}$, as stated in Proposition \ref{prop_noisy_consensus_cov_ss_dt}(i).

Similarly, specializing Proposition~\ref{prop_dt_cov_dyn}(ii) to~\eqref{eq:disagreement_dt} gives
\begin{align*}
    \bar{{\mathbf{c}}}(k + 1) = \bar{\mathcal{A}}_c \bar{\mathbf{c}}(k) + \boldsymbol{\pi}(k+1) \otimes \text{vec}(\mathbb{V} Q \mathbb{V}).
\end{align*}
where $\bar{\mathcal{A}}_c = (P^\top \otimes \mathbb{I}_{m^2}) \text{diag}_n(\mathbb{V} B_i \oplus \mathbb{V} B_i)$. It can be shown analogously to $\widebar{\mathcal{H}}_c$ that all eigenvalues of $\bar{\mathcal{A}}_c$ fall inside the unit circle, and a steady solution for $\bar{\mathbf{c}}(k)$ exists. At steady state, ${\bar{\mathbf{c}}}_{\text{ss}} = {\bar{\mathbf{c}}}(k + 1) = {\bar{\mathbf{c}}}(k)$ and $\boldsymbol{\pi}(k) = \boldsymbol{\pi}_{\text{ss}}$. Solving for ${\bar{\mathbf{c}}}_{\text{ss}}$ gives the result that is stated in Proposition \ref{prop_noisy_consensus_cov_ss_dt}(ii).
\end{proof}

We now extend Lemma~\ref{lemma_ref_tracking_eig} and Proposition~\ref{prop_ref_tracking_cov_ss} to the discrete-time setting. Let $\mathcal H_k$ and $\mathcal A_k$ be the system matrices in~\eqref{eq:dyn_mean_dt} and~\eqref{eq:dyn_second_moment_dt} after specializing Proposition~\ref{prop_dt_cov_dyn} to the DT MJLS~\eqref{eq:leader-follower-reference-dt}.

\begin{lemma}
For the discrete-time noisy leader-follower reference tracking  dynamics~\eqref{eq:leader-follower-reference-dt} under Assumptions~\ref{as:ergodic} and \ref{as:connected}, all eigenvalues of matrices $\mathcal{N}_k$ and $\mathcal{M}_k$ lie strictly within the unit circle of the complex plane.
 \label{lemma_ref_tracking_eig_dt}
\end{lemma}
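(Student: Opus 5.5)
The statement names $\mathcal{N}_k$ and $\mathcal{M}_k$, but in the discrete-time setting the relevant matrices are $\mathcal{H}_k = (P^\top \otimes \mathbb{I}_m)\,\text{diag}_n(A_i)$ and $\mathcal{A}_k = (P^\top \otimes \mathbb{I}_{m^2})\,\text{diag}_n(A_i \otimes A_i)$, with $A_i = e^{-M_i}$ and $M_i = L_i + K$. These are what I will prove the claim for. As in Lemma~\ref{lemma_ref_tracking_eig}, I assume $|\mathcal{K}|>0$. Rather than the Laplacian/null-space intersection argument of Lemma~\ref{lemma_noisy_consensus_eig_dt}, I will bound the spectral radius directly by a contraction argument in a suitably chosen block norm.

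\emph{Step 1: each $A_i$ is a strict contraction.} Each $M_i$ is symmetric and positive semidefinite, being a graph Laplacian plus a nonnegative diagonal. Suppose $\mathbf{x}^\top M_i \mathbf{x} = \mathbf{x}^\top L_i \mathbf{x} + \kappa \sum_{j\in\mathcal{K}} x_j^2 = 0$. Connectedness of $G_i$ (Assumption~\ref{as:connected}) forces $\mathbf{x} = c\mathbf{1}_m$, and then $\kappa |\mathcal{K}| c^2 = 0$ forces $c=0$. Hence $\lambda_{\min}(M_i) > 0$. Since $A_i = e^{-M_i}$ is symmetric with eigenvalues $e^{-\lambda(M_i)}$, this gives $\|A_i\|_2 = e^{-\lambda_{\min}(M_i)} < 1$. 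Moreover $\|A_i \otimes A_i\|_2 = \|A_i\|_2^2 < 1$.

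\emph{Step 2: a block max-norm in which $\mathcal{H}_k^\top$ contracts.} I work with $\mathcal{H}_k^\top = \text{diag}_n(A_i)(P \otimes \mathbb{I}_m)$, which has the same eigenvalues as $\mathcal{H}_k$. For $\mathbf{y} = [\mathbf{y}_1^\top,\dots,\mathbf{y}_n^\top]^\top \in \mathbb{R}^{nm}$, define $\|\mathbf{y}\|_\ast = \max_j \|\mathbf{y}_j\|_2$. The $i$-th block of $\mathcal{H}_k^\top \mathbf{y}$ is $A_i \sum_j P_{ij}\mathbf{y}_j$. Because $P$ is row-stochastic, $\|\sum_j P_{ij}\mathbf{y}_j\|_2 \le \max_j\|\mathbf{y}_j\|_2$. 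Therefore
\begin{align*}
\|\mathcal{H}_k^\top \mathbf{y}\|_\ast \le \Big(\max_i \|A_i\|_2\Big) \|\mathbf{y}\|_\ast .
\end{align*}
The spectral radius is bounded by any induced norm, so $\rho(\mathcal{H}_k) = \rho(\mathcal{H}_k^\top) \le \max_i e^{-\lambda_{\min}(M_i)} < 1$.

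\emph{Step 3: the same bound for $\mathcal{A}_k$.} I repeat Step 2 verbatim with blocks in $\mathbb{R}^{m^2}$ and $A_i \otimes A_i$ in place of $A_i$. This yields $\rho(\mathcal{A}_k) \le \max_i \|A_i\|_2^2 < 1$.

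The only substantive step is Step 1, the positive definiteness of $M_i$. This is exactly where the hypothesis $|\mathcal{K}|>0$ and the connectedness of every $G_i$ enter. Without a leader, $\mathbf{1}_m$ is a fixed vector and the bound degenerates to the eigenvalue at $1$ found in Lemma~\ref{lemma_noisy_consensus_eig_dt}. The device that avoids ergodicity arguments altogether is to pass to the transpose, so that the stochastic rows of $P$ act as averaging, which is non-expansive in the block max-norm. Ergodicity (Assumption~\ref{as:ergodic}) is therefore not even needed for this lemma.
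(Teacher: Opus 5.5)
Your proof is correct, and it takes a different route from the paper's. You also rightly read $\mathcal{N}_k,\mathcal{M}_k$ in the statement as $\mathcal{H}_k,\mathcal{A}_k$, and you supply the hypothesis $|\mathcal{K}|>0$, which the statement omits but the paper's proof uses.

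The paper's argument has two steps. It observes that the eigenvalues of $L_i+K$ lie in the open right half-plane, so the eigenvalues of each $A_i$ lie inside the unit circle. It then appeals to ``the same logic'' as Lemma~\ref{lemma_noisy_consensus_eig_dt}: intersect the eigenspaces at $1$ of $P^\top\otimes\mathbb{I}_m$ and $\text{diag}_n(A_i)$, and invoke Perron--Frobenius together with ergodicity. As written, this is a sketch. Knowing the spectrum of each factor does not by itself control the spectrum of the product $(P^\top\otimes\mathbb{I}_m)\,\text{diag}_n(A_i)$. The eigenspace-intersection step excludes only the eigenvalue $1$ itself. To conclude $\rho(\mathcal{H}_k)<1$, one still needs an a priori bound $\rho\le 1$ and the entrywise nonnegativity of $\mathcal{H}_k$, and the paper leaves both implicit.

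You instead replace spectral information about $A_i$ by norm information, namely $\|A_i\|_2<1$, which uses the symmetry of $A_i$. You then choose an induced block max-norm in which the Markov factor, acting through the transpose, is non-expansive, so submultiplicativity finishes the argument. This gives a complete proof with explicit bounds $\rho(\mathcal{H}_k)\le\max_i e^{-\lambda_{\min}(M_i)}$ and $\rho(\mathcal{A}_k)\le\max_i e^{-2\lambda_{\min}(M_i)}$. It also shows that ergodicity is unnecessary; only row-stochasticity of $P$ is used.

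The Perron--Frobenius framing relies on nonnegativity rather than symmetry. $A_i=e^{-M_i}$ is entrywise nonnegative, and it is strictly substochastic whenever every agent receives information, directly or indirectly, from a leader. That property survives for directed graphs, where $\|e^{-M_i}\|_2$ can exceed $1$ and your Step 1 would fail. Your argument adapts to that setting just by using the $\infty$-norm instead of the Euclidean norm inside each block.
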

\begin{proof}
When specializing Proposition~\ref{prop_dt_cov_dyn} to the DT MJLS~\eqref{eq:leader-follower-reference-dt}, $W_i = e^{-(L_i + K)} = A_i$, and \eqref{eq:generic_mjls_dt} reduces to~\eqref{eq:noisy-consensus-switching-dt}. Consequently, $\mathcal{H}_k = (P^\top \otimes \mathbb{I}_m)  \text{diag}_n(A_i)$ and $\mathcal{A}_k = (P^\top \otimes \mathbb{I}_{m^2}) \text{diag}_n(A_i \otimes A_i)$. For $|\mathcal{K}| > 0$, the eigenvalues of $L_i + K$ lie strictly in the right half-plane, meaning that those of $A_i$ lie strictly within the unit circle. Using the same logic as in the proof of Lemma~\ref{lemma_noisy_consensus_eig_dt}, all eigenvalues of $\mathcal{H}_k$ also lie strictly within the unit circle. 

It can be shown analogously that all eigenvalues of $\mathcal{A}_k$ lie strictly within the unit circle. 
\end{proof}

\begin{proposition} For the leader-follower reference tracking dynamics~\eqref{eq:leader-follower-reference-dt} with $|\mathcal{K}| > 0$ and under Assumptions~\ref{as:ergodic}~and~\ref{as:connected}, the following statements hold:
\begin{enumerate}
    \item the steady-state mean of the state vector is zero, i.e.,
    \begin{align}\label{eq:mean_refer_dt}
    {\boldsymbol{\hat \nu}}_{\text{ss}} = \mathbf{0}_{nm};
    \end{align}
    \item the steady-state second moment of the state vector is
     \begin{align}
    {\mathbf{\hat c}}_{\text{ss}} = (\mathbb{I}_{nm^2} - \mathcal{A}_k)^{-1} (\boldsymbol{\pi}_{\text{ss}} \otimes \text{vec}(Q)),
    \label{eq:second_moment_tracking_dt}
\end{align}
where $\mathcal{A}_k = (P^\top \otimes \mathbb{I}_{m^2}) \text{diag}_n(A_i \otimes A_i)$.
\end{enumerate}
 \label{prop_ref_tracking_cov_ss_dt}
\end{proposition}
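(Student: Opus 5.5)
The plan is to mirror the proof of Proposition~\ref{prop_ref_tracking_cov_ss}, with Lemma~\ref{lemma_ref_tracking_eig_dt} playing the role of Lemma~\ref{lemma_ref_tracking_eig}. Specializing Proposition~\ref{prop_dt_cov_dyn} to \eqref{eq:leader-follower-reference-dt} with $W_i = A_i = e^{-M_i}$ gives two recursions:
\begin{align*}
\hat{\boldsymbol{\nu}}(k+1) &= \mathcal{H}_k \hat{\boldsymbol{\nu}}(k),\\
\hat{\mathbf{c}}(k+1) &= \mathcal{A}_k \hat{\mathbf{c}}(k) + \boldsymbol{\pi}(k+1)\otimes \text{vec}(Q).
\end{align*}
Here $\mathcal{H}_k = (P^\top\otimes\mathbb{I}_m)\text{diag}_n(A_i)$ and $\mathcal{A}_k = (P^\top\otimes\mathbb{I}_{m^2})\text{diag}_n(A_i\otimes A_i)$. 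Lemma~\ref{lemma_ref_tracking_eig_dt} says that, when $|\mathcal{K}|>0$, the spectral radii of $\mathcal{H}_k$ and $\mathcal{A}_k$ are strictly less than one. Its statement mislabels these matrices as $\mathcal{N}_k,\mathcal{M}_k$, but its proof concerns $\mathcal{H}_k,\mathcal{A}_k$, so that is how I will use it.

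For part (i), the mean recursion is homogeneous and gives $\hat{\boldsymbol{\nu}}(k) = \mathcal{H}_k^k \hat{\boldsymbol{\nu}}(0)$. Since $\rho(\mathcal{H}_k)<1$, we have $\mathcal{H}_k^k \to 0$, so $\hat{\boldsymbol{\nu}}_{\text{ss}} = \mathbf{0}_{nm}$ for every initial condition and initial distribution of the chain. The same conclusion follows from the fixed-point equation $(\mathbb{I}_{nm}-\mathcal{H}_k)\hat{\boldsymbol{\nu}}_{\text{ss}}=0$, because $1$ is not an eigenvalue of $\mathcal{H}_k$.

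For part (ii), I will unroll the recursion:
\begin{align*}
\hat{\mathbf{c}}(k) = \mathcal{A}_k^k \hat{\mathbf{c}}(0) + \sum_{j=1}^{k} \mathcal{A}_k^{k-j}\bigl(\boldsymbol{\pi}(j)\otimes\text{vec}(Q)\bigr).
\end{align*}
The first term vanishes because $\rho(\mathcal{A}_k)<1$. For the sum, write $\boldsymbol{\pi}(j) = \boldsymbol{\pi}_{\text{ss}} + \mathbf{e}(j)$. Ergodicity (Assumption~\ref{as:ergodic}) gives $\mathbf{e}(j)\to 0$, in fact geometrically. The $\boldsymbol{\pi}_{\text{ss}}$ part is a Neumann series converging to $(\mathbb{I}_{nm^2}-\mathcal{A}_k)^{-1}(\boldsymbol{\pi}_{\text{ss}}\otimes\text{vec}(Q))$.

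The remainder is a discrete convolution of the summable kernel $\|\mathcal{A}_k^{i}\|$ with the sequence $\mathbf{e}(j)$, which tends to zero. It therefore vanishes as $k\to\infty$. This is the only step needing any care, since the forcing term is time-varying; I would justify it with the bound $\|\mathcal{A}_k^{i}\|\le c\,r^{i}$ for some $r\in(\rho(\mathcal{A}_k),1)$ and split the sum at $j=k/2$. Once the limit is known to exist, I can also recover the formula directly by setting $\hat{\mathbf{c}}(k+1)=\hat{\mathbf{c}}(k)=\hat{\mathbf{c}}_{\text{ss}}$ and $\boldsymbol{\pi}(k+1)=\boldsymbol{\pi}_{\text{ss}}$. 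The matrix $\mathbb{I}_{nm^2}-\mathcal{A}_k$ is invertible because $1\notin\sigma(\mathcal{A}_k)$, which yields \eqref{eq:second_moment_tracking_dt}.

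Finally, since $\hat{\boldsymbol{\nu}}_{\text{ss}}=0$, the steady-state second moment coincides with the steady-state covariance. The errors are then $\mathbf{h}_S\hat{\mathbf{c}}_{\text{ss}}$ and $\mathbf{h}_N^i\hat{\mathbf{c}}_{\text{ss}}$, as in the continuous-time case.
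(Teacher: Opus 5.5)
Your proposal is correct and follows the paper's route: specialize Proposition~\ref{prop_dt_cov_dyn}, invoke Lemma~\ref{lemma_ref_tracking_eig_dt} (correctly read as a statement about $\mathcal{H}_k$ and $\mathcal{A}_k$) to get spectral radii below one, and solve the fixed-point equations. Your unrolled-sum and convolution argument adds one thing the paper only asserts: it makes rigorous that the second-moment limit exists despite the time-varying forcing $\boldsymbol{\pi}(k+1)\otimes\text{vec}(Q)$.
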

\begin{proof}
    For the DT MJLS \eqref{eq:leader-follower-reference-dt}, all eigenvalues of ${\mathcal{H}_k}$ and ${\mathcal{A}_k}$ lie strictly within the unit circle, meaning that there are steady-state solutions for $\hat{\boldsymbol{\nu}}(k)$ and $\hat{\mathbf{c}}(k)$. Since ${\hat{\boldsymbol{\nu}}}(k + 1) = {\hat{\boldsymbol{\nu}}}(k)$ at steady state, it must be true that $\hat {\boldsymbol{\nu}}_{\text{ss}} = \mathbf{0}_{nm}$ as stated in in Proposition \ref{prop_ref_tracking_cov_ss_dt}(i). Similarly, at steady state, ${\hat{\mathbf{c}}}(k + 1) = {\hat{\mathbf{c}}}(k)$ and $\boldsymbol{\pi}(k) = \boldsymbol{\pi}_{\text{ss}}$. Solving for $\hat{\mathbf{c}}(k)$ under these conditions yields the result for $\hat{\mathbf{c}}_{\text{ss}}$ stated in Proposition \ref{prop_ref_tracking_cov_ss_dt}(ii).
\end{proof}

\begin{remark}
    The results of Propositions~\ref{prop:ct-2-topology} and~\ref{prop:ct-2-topology-leader} can be easily extended to the discrete time case. Specifically, let the state transition matrix be
    \[
    P = \begin{bmatrix}
        1 - v_1 & v_1 \\ v_2 & 1- v_2
    \end{bmatrix} =  
 \begin{bmatrix}
        1 -\beta {\pi}_{\text{ss},2} & \beta {\pi}_{\text{ss},2} 
        \\
       \beta {\pi}_{\text{ss},1} & 1 - \beta{\pi}_{\text{ss},1}
    \end{bmatrix} ,
    \]
where $\beta = v_1+v_2$ and $\boldsymbol{\pi}_\text{ss} = \frac{1}{\beta} [v_2 \hspace{5pt} v_1]^\top$ is the stationary distribution. The system error is  $\text{E} = \mathbf{h}_S \widebar{\mathcal{M}}_d^+  T$, where $\widebar{\mathcal{M}}_d = R_1 - \beta R_2$ with
\begin{align*}
    R_1 &= (\mathbb{I}_{nm^2} - (\mathbb{I}_2 \otimes \mathbb{V} \otimes \mathbb{V}) \text{diag}_n (\mathbb{V} B_i\mathbb{V} \otimes \mathbb{V} B_i\mathbb{V}), \\
    R_2 &=\left(\begin{bmatrix}
        -{\pi}_{\text{ss},2} &  {\pi}_{\text{ss},2} 
        \\
    {\pi}_{\text{ss},1} & -{\pi}_{\text{ss},1}
    \end{bmatrix} \otimes \mathbb{V} \otimes \mathbb{V}\right) \text{diag}_n (\mathbb{V} B_i\mathbb{V} \otimes \mathbb{V} B_i\mathbb{V}). 
\end{align*}
Analogous steps to the proof of Proposition~\ref{prop:ct-2-topology} can be employed to derive similar expressions. \oprocend
\end{remark}

\newpage 

\section{Proof of Proposition \ref{prop_ct_cov_dyn}}\label{ap:prop_ct_cov_dyn} \label{proof_prop_ct_cov_dyn}
To prove Proposition \ref{prop_ct_cov_dyn}, we require the following lemma. 
\begin{lemma} Let event $B \subset \mathcal{J}$, where $\mathcal{J}$ is a finite sample space. Then, for a random variable $Y$ and another event $A$,
\begin{align*}
    \mathbb{E}[ Y \mathbbm{1}(A)]  = \sum_{j \in \mathcal{J}} \mathbb{E}[ Y | A \cap B = j] \Pr(A | B = j) \Pr(B = j) ,
\end{align*} 
where $j$ is an outcome of $\mathcal{J}$.
\label{lemma_prob_ind}
\end{lemma}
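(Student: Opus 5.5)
The identity follows from the law of total expectation, conditioning on the finitely many outcomes of $B$ and then on $A$. Since $\mathcal{J}$ is finite and the events $\{B = j\}$, $j \in \mathcal{J}$, are disjoint and exhaust the sample space, we have $\mathbbm{1}(\cdot) = \sum_{j\in\mathcal{J}} \mathbbm{1}(B=j)$ almost surely. Multiplying $Y\mathbbm{1}(A)$ by this sum and using linearity of expectation over the finite index set gives
\begin{align*}
\mathbb{E}[Y\mathbbm{1}(A)] = \sum_{j\in\mathcal{J}} \mathbb{E}[Y \mathbbm{1}(A)\mathbbm{1}(B=j)] = \sum_{j\in\mathcal{J}} \mathbb{E}[Y \mathbbm{1}(A \cap \{B=j\})].
\end{align*}
Integrability of $Y$, implicit in the statement, makes each term finite and justifies the interchange; finiteness of $\mathcal{J}$ means no convergence argument is needed.

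Next, apply the elementary identity $\mathbb{E}[Y\mathbbm{1}(E)] = \mathbb{E}[Y \mid E]\Pr(E)$ to each event $E = A\cap\{B=j\}$. Then factor $\Pr(A\cap\{B=j\}) = \Pr(A\mid B=j)\Pr(B=j)$. Reading $\mathbb{E}[Y \mid A\cap B = j]$ as conditioning on $A\cap\{B=j\}$ yields the claimed formula term by term.

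The only delicate point is terms with zero probability. If $\Pr(B=j)=0$ or $\Pr(A\mid B=j)=0$, the conditional expectation is undefined. In those cases, however, $\mathbb{E}[Y\mathbbm{1}(A\cap\{B=j\})]=0$, so adopting the convention that such summands are zero keeps the identity valid. I will state this convention explicitly and otherwise treat the argument as routine bookkeeping.
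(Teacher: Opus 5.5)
Your proposal is correct and follows the paper's argument. The paper likewise splits $\mathbb{E}[Y\mathbbm{1}(A)]$ over the partition $\{B=j\}$, $j\in\mathcal{J}$, applies $\mathbb{E}[Y\mathbbm{1}(C)]=\mathbb{E}[Y\mid C]\Pr(C)$ with $C=A\cap\{B=j\}$, and factors $\Pr(C)=\Pr(A\mid B=j)\Pr(B=j)$; it justifies the middle identity through a conditional density, whereas your linearity argument is cleaner, and your zero-probability convention covers a case the paper omits. One slip to fix: where you write $\mathbbm{1}(\cdot)=\sum_{j}\mathbbm{1}(B=j)$, the left-hand side should be the constant $1$.
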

\begin{proof}\emph{(Lemma \ref{lemma_prob_ind})} 
    For the probability density function $f(y)$ and some event $C$, $\mathbb{E}[ Y \mathbbm{1}(C)] = \int y f(y | C) \Pr(C) dy = \mathbb{E}[Y | C] \Pr(C)$. Let $\mathbb{E}[ Y \mathbbm{1}(A)] = \sum_{j \in \mathcal{J}} \mathbb{E}[ Y \mathbbm{1}({A \cap B = j})]$.   We complete the proof by letting $C = A \cap (B = j)$.
\end{proof}

\begin{proof}\emph{(Proposition \ref{prop_ct_cov_dyn})}
Within this proof, we abbreviate $a(t) = a_t$ and $b_i(t) = b_{i,t}$. Then, for some small time increment $h$, we rewrite \eqref{eq:generic_mjls} as $\mathbf{x}_{t + h} = (\mathbb{I}_m - hZ_t) \mathbf{x}_t + F d\mathbf{W}(h)$
where $Z_t = Z_i$ if $\mathcal{G}_t = G_i \in \mathbb G$ is the network graph at time $t$. For $j \neq i$, let
\begin{align*}
    &\text{p}_{ii}(h) = \Pr (\mathcal{G}_{t+h} = G_i | \mathcal{G}_t = G_i) \approx 1 - v_i h + O(h^2)
    \\
    &\text{p}_{ji}(h) = \Pr (\mathcal{G}_{t+h} = G_i | \mathcal{G}_t = G_j) \approx q_{ji} h + O(h^2) ,
\end{align*}
where $i,j \in S$. Then, from Lemma \ref{lemma_prob_ind},
\begin{align*}
    \boldsymbol{\mu}_{t+h}^i &= \sum_{j \in S} \mathbb{E}[\mathbf{x}_{t+h} | \mathcal{G}_t = G_j] \text{p}_{ji}(h) \pi_{j,t}
    \\ &=  (\mathbb{I}_m - hZ_{i}) \boldsymbol{\mu}_{t}^i (1 - v_i h) + \sum_{j \in S \setminus i} h q_{ji} (\mathbb{I}_m - hZ_{j}) \boldsymbol{\mu}_{t}^j .
\end{align*}
We take the derivative $\boldsymbol{\dot{\mu}}_{t}^i = \lim_{h \rightarrow 0} ({\boldsymbol{{\mu}}_{t+h}^i - \boldsymbol{{\mu}}_{t}^i})/{h}$:
\begin{align*}
    \boldsymbol{\dot{\mu}}_{t}^i &= \hspace{-1pt} -(Z_{i} + v_i \mathbb{I}_m)\boldsymbol{\mu}_{t}^i  + \hspace{-1pt}\sum_{j \in S \setminus i} v_j p_{ji} \boldsymbol{\mu}_{t}^j 
    \\&= \hspace{-1pt}-Z_{i}\boldsymbol{\mu}_{t}^i  + \hspace{-1pt}\sum_{j \in S} \Gamma_{ji} \boldsymbol{\mu}_{t}^j,
\end{align*}
Vertically stacking $\boldsymbol{\dot{\mu}}_{t}^i$ for all $i \in S$ into an $mn \times 1$ vector gives the result that is stated in Proposition \ref{prop_ct_cov_dyn}(i). Similarly,
\begin{align*}
    C^i_{t+h} 
    &= \sum_{j \in S} \mathbb{E}[\mathbf{x}_{t+h}\mathbf{x}_{t+h}^\top |\mathcal{G}_{t+h} = G_i , \mathcal{G}_t = G_j] \text{p}_{ji}(h) \pi_{j,t}
    \\
    &= (C^i_t - hZ_{i} C^i_t - hC^i_t Z_{i}^\top - v_i h C^i_t) 
    \\
    &\hspace{50pt}+ \sum_{j \in S \setminus i}  C^j_t h q_{ji} + \pi_{i,t+h} h Q + O(h^2).
\end{align*}
We take the derivative $\dot{C}^i_t = \lim_{h \rightarrow 0} ({C^i_{t+h} - C^i_t})/{h}$:
\begin{align*}
\dot{C}^i_t &= -(Z_{i} C^i_t + C^i_t Z_{i}^\top + v_i C^i_t) + \hspace{-2pt}\sum_{j \in S \setminus i}  C^j_t q_{ji} + \pi_{i,t} Q
\\ &= -(Z_{i} C^i_t + C^i_t Z_{i}^\top) + \sum_{j \in S}  C^j_t \Gamma_{ji} + \pi_{i,t} Q.
\end{align*}
Then, let $\mathbf{{c}}_t^i = \text{vec}({C}^i_t)$ and vectorize the previous equation with the help of the rule $\text{vec}(AXB) = (B^\top \otimes A) \text{vec}(X)$ for matrices $A$, $X$, and $B$ \cite{horn1994topics}. As a result,
\begin{align*}
\dot{\mathbf{c}}_t^i &= -(Z_{i} \oplus Z_{i}) \textbf{{c}}_t^i  + \sum_{j \in S} \Gamma_{ji} \textbf{{c}}_t^j + \pi_{i,t} \text{vec}({Q}),
\end{align*}
which, when stacked vertically for all $i \in S$ into an $nm^2 \times 1$ vector, gives  the result that is stated in Proposition \ref{prop_ct_cov_dyn}(ii).
\end{proof}


\section{Proof of Proposition \ref{prop_dt_cov_dyn}}\label{ap:prop_dt_cov_dyn}

This proof parallels that in \ref{ap:prop_ct_cov_dyn}, which may contain information relevant to this section. First, let $W_k = W_i$ if $\mathcal{G}_k = G_i \in \mathbb G$ is the network graph at time step $k$. From  \ref{lemma_prob_ind},
\begin{align*}
    \boldsymbol{\mu}_{k}^i &= \sum_{j \in S} \mathbb{E}[\mathbf{x}_{k} | \mathcal{G}_{k-1} = G_j] {P}_{ji} \pi_{j,k-1} =  \sum_{j \in S} W_j \boldsymbol{\mu}_{k-1}^j P_{ji}.
\end{align*}
Then, vertically stacking $\boldsymbol{\mu}_{k}^i$ for all $i \in S$ into an $mn \times 1$ vector gives the result stated in \ref{prop_dt_cov_dyn}(i). Similarly, 
\begin{align*}
    C^i_{k} 
    &= \sum_{j \in S} \mathbb{E}[\mathbf{x}_{k}\mathbf{x}_{k}^\top |\mathcal{G}_{k} = G_i , \mathcal{G}_{k-1} = G_j] P_{ji} \pi_{j,k-1}
    \\
    &= \sum_{j \in S}  W_j C^j_{k-1} W_j^\top P_{ji} + \pi_{i,k} Q .
\end{align*}
Vectorizing this equation yields:
\begin{align*}
\mathbf{{c}}_k^i &= \sum_{j \in S} P_{ji} (W_j \otimes W_j) \mathbf{{c}}_{k-1}^i + \pi_{i,k} \text{vec}({Q}),
\end{align*}
which, when stacked vertically for all $i \in S$ into an $nm^2 \times 1$ vector, gives  the result that is stated in Proposition \ref{prop_dt_cov_dyn}(ii).

Similarly, the graph switching behavior of the discrete-time MSG with the graph set $\mathbb G$ is determined by a discrete-time MC (DTMC). The DTMC is described by the transition probability matrix $P \in \mathbb{R}^{n \times n}$ containing the elements: 
\begin{align*}
    {P}_{ij}(k) = \Pr (r(k) = j | r({k - 1}) = i)  ,
\end{align*}
for $i, j \in S$. Additionally,  $P$ is time-homogeneous, meaning that it is independent of $k$, and row-stochastic. As for the CTMC, the probability distribution of the DTMC at time step $k$ is given by $\boldsymbol{\pi}(k) \in \Delta_n$. In this case, $\boldsymbol{\pi}(k) = {({P}^\top)}^k \boldsymbol{\pi}(0)$.


\end{document}